\newtheorem{Definition}{Definition}[section]
\newtheorem{Theorem}{Theorem}[section]
\newtheorem{Remark}{Remark}[section]
\newtheorem{Lemma}{Lemma}[section]
\newtheorem{Proposition}{Proposition}[section]
\begin{document}

\title{Recovery of signals under the high order RIP condition via prior support information }
\author{Wengu~Chen and Yaling~Li
\thanks{W. Chen is with Institute of Applied Physics and Computational Mathematics,
Beijing, 100088, China, e-mail: chenwg@iapcm.ac.cn.}
\thanks{Y. Li is with Graduate School, China Academy of Engineering Physics,
Beijing, 100088, China, e-mail:leeyaling@126.com.}
\thanks{This work was supported by the NSF of China (Nos.11271050, 11371183)
and Beijing Center for Mathematics and Information Interdisciplinary
Sciences (BCMIIS).} }

\maketitle

\begin{abstract}
In this paper we study the recovery conditions of weighted $l_{1}$
minimization for signal reconstruction from incomplete linear
measurements when partial prior support information is available. We
obtain that a high order RIP condition can guarantee stable and
robust recovery of signals in bounded $l_{2}$ and Dantzig selector
noise settings. Meanwhile, we not only prove that the sufficient
recovery condition of weighted $l_{1}$ minimization method is weaker
than that of standard $l_{1}$ minimization method, but also prove
that weighted $l_{1}$ minimization method provides better upper
bounds on the reconstruction error in terms of the measurement noise
and the compressibility of the signal, provided that the accuracy of
prior support estimate is at least $50\%$. Furthermore, the
condition is proved sharp.
\end{abstract}

{Keywords: Compressed sensing, restricted isometry property,
weighted $l_{1}$ minimization.}

\section{Introduction}
{Compressed}  sensing is a new type of sampling theory that admits
that high dimensional sparse signals can be reconstructed through
fewer measurements than their ambient dimension. The central goal in
compressed sensing is to recover a signal $x\in\mathbb{R}^{N}$ based
on $A$ and $y$ from the following model:
\begin{align}\label{m1}
y=Ax+z
\end{align}
where sensing matrix $A\in \mathbb{R}^{n\times N}$ with $n\ll N$,
i.e., using very few measurements, $y\in \mathbb{R}^{n}$ is a vector
of measurements, and $z \in \mathbb{R}^{n}$ is the measurement
error. In past decade, compressed sensing has triggered considerable
research in a number of fields including applied mathematics,
statistics, electrical engineering, seismology and signal
processing. Compressed sensing is especially promising in
applications where taking measurements is costly, e.g.,
hyperspectral imaging \cite{DF}, as well as in applications where
the ambient dimension of the signal is very large, e.g., medical
image reconstruction \cite{LSLDP}, DNA microarrays \cite{PVMH},
radar system \cite{BS,HS,ZZZ}.

To reconstruct the signal $x$ from (\ref{m1}), Cand\`es and Tao
\cite{CT} proposed the following constrained $l_{1}$ minimization
method:
\begin{align}\label{f1}
  \underset{x\in \mathbb{R}^{N}}{\rm minimize}\quad\|x\|_{1} \ \ \ {\rm subject\quad to}\ \ \
  \|y-Ax\|_2\leq\epsilon.
\end{align}
It is well known that $l_{1}$ minimization
 is a convex relaxation of $l_{0}$ minimization and is polynomial-time
solvable.
 And it has been shown that $l_{1}$ minimization is an effective way to recover sparse signals
 in many settings \cite{CZ2,CWX1,CWX,CZ1,ML,CZ,CRT,CXZ}.
 Cai and Zhang \cite{CZ,CZ2} established sharp restricted isometry conditions to achieve the exact and stable
 recovery of signals
 in both noiseless and noisy cases via $l_{1}$ minimization method.

Note that compressed sensing is a nonadaptive data acquisition
technique and $l_{1}$ minimization method (\ref{f1}) is itself
nonadaptive because no prior information on the signal $x$ is used.
In practical examples, however, the estimate of the support of the
signal or of its largest coefficients may be possible to be drawn.
For example, support estimation of the previous time instant
 may be applied to recover time sequences of sparse signals iteratively.
 Incorporating prior information is very useful for recovering signals from compressive measurements.
Thus, the following weighted $l_{1}$ minimization method  which
incorporates partial support information of the signals has been
introduced to replace standard $l_{1}$ minimization
\begin{align}\label{f3}
 \underset{x\in \mathbb{R}^{N}}{\rm minimize}\quad\|x\|_{1,\mathrm{w}} \ \ \
 {\rm subject\quad to}\ \ \  \|y-Ax\|_2\leq\epsilon,
 \end{align}
where $\mathrm{w} \in [0, 1]^{N}$ and
$\|x\|_{1,\mathrm{w}}=\sum\limits_{i}\mathrm{w}_{i}|x_{i}|$.
Reconstructing compressively sampled signals with partially known
support has been previously studied in the literature; see
\cite{BMP,VL,LV,J,KXAH,LV1,FMSY}. Borries, Miosso and Potes in
\cite{BMP}, Khajehnejad $et~al.$ in \cite{KXAH}, and Vaswani and Lu
in \cite{VL} introduced the problem of signal recovery with
partially known support independently. The works by Borries $et~al.$
in \cite{BMP}, Vaswani and Lu in \cite{VL,LV,VL1} and Jacques in
\cite{J} incorporated known support information using weighted
$l_{1}$ minimization approach with zero weights on the known
support, namely, given a support estimate
$\widetilde{T}\subset\{1,2,\ldots, N\}$ of unknown signal $x$,
setting $\mathrm{w}_{i}=0$ whenever $i\in \widetilde{T}$ and
$\mathrm{w}_{i}=1$ otherwise, and derived sufficient recovery
conditions. Friedlander $et~al.$ in \cite{FMSY} extended weighted
$l_{1}$ minimization approach to nonzero weights. They allow the
weights $\mathrm{w}_{i}=\omega\in [0, 1]$ if $i\in \widetilde{T}$.
Since Friedlander $et~al.$ incorporated the prior support
information and consider the accuracy of the support estimate, they
derived the stable and robust recovery guarantees for weighted
$l_{1}$ minimization which generalize the results of Cand\`es,
Romberg and Tao in \cite{CRT} stated below. Friedlander $et~al.$
\cite{FMSY} pointed out that once at least $50\%$ of the support
information is accurate,
  the weighted $l_{1}$ minimization method (\ref{f3}) can stably and robustly recover any signals under weaker sufficient
  conditions than the analogous conditions for standard  $l_{1}$ minimization method (\ref{f1}).
  In addition, the weighted $l_{1}$ minimization method (\ref{f3}) gives better upper bounds on the reconstruction error.
Furthermore, they also pointed out sufficient conditions are weaker
than those of \cite{VL} when $\omega=0$.

To recover sparse signals via constrained $l_{1}$ minimization,
Cand\`es and Tao \cite{CT} introduced the notion of Restricted
Isometry Property (RIP), which is one of the most commonly used
frameworks for compressive sensing. The definition of RIP is as
follows.
\begin{Definition}
Let $A\in \mathbb{R}^{n\times N}$ be a matrix and $1\leq k \leq N$
is an integer. The restricted isometry constant (RIC) $\delta_{k}$
of order $k$ is defined as the smallest nonnegative
 constant that
satisfies
$$(1-\delta_{k})\|x\|_{2}^{2}\leq\|Ax\|_{2}^{2}\leq(1+\delta_{k})\|x\|_{2}^{2},$$
for all $k-$sparse vectors $x\in\mathbb{R}^{N}.$ A vector $x\in
\mathbb{R}^{N}$ is $k-$sparse if $|supp(x)|\leq k$, where
$supp(x)=\{i: x_{i}\neq 0\}$ is the support of $x$. When $k$ is not
an integer, we define $\delta_{k}$ as $\delta_{\lceil k \rceil}$,
where $\lceil k \rceil$ denotes the smallest integer strictly bigger
than $k$.
\end{Definition}

Cand\`es, Romberg and Tao \cite{CRT} showed that the condition
$\delta_{ak}+a\delta_{(a+1)k}<a-1$ with $a\in \frac 1k\mathbb{Z}$
and $a>1$
 is sufficient for stable and robust recovery of all signals using $l_{1}$ minimization method (\ref{f1}).
  Cai and Zhang \cite{CZ} improved the result of Cand\`es, Romberg and Tao \cite{CRT} and proved that the condition $\delta_{tk}<\sqrt{\frac{t-1}{t}}$ with $t\geq 4/3$
can guarantee the exact recovery of all $k-$sparse signals in the
noiseless case and stable recovery of  approximately sparse signals
in the noise case by $l_{1}$ minimization method (\ref{f1}).
Furthermore, Cai and Zhang proved that for any $\epsilon>0$,
$\delta_{tk}<\sqrt{\frac{t-1}{t}}+\epsilon$ fails to ensure the
exact reconstruction of all $k-$sparse signals and stable
reconstruction of approximately sparse signals for large $k$.

In \cite{CZ}, Cai and Zhang use the following $l_{1}$ minimization
 \begin{align}\label{f4}
  \underset{x\in \mathbb{R}^{N}}{\rm minimize}\quad\|x\|_{1} \ \ \ {\rm subject\quad to}\ \ \
  \|y-Ax\|_2\in \mathcal{B},
\end{align}
where $\mathcal{B}$ is a bounded set determined by the noise
structure, and $\mathcal{B}$ is especially taken to be $\{0\}$ in
the noiseless case. They consider two types of noise settings
\begin{align}\label{b1}
 \mathcal{B}^{l_{2}}(\varepsilon)=\{z: \|z\|_{2}\leq\varepsilon\}
\end{align}
and
\begin{align}\label{b2}
 \mathcal{B}^{DS}(\varepsilon)=\{z:
 \|A^{T}z\|_{\infty}\leq\varepsilon\}.
\end{align}
 In this paper, we adopt the corresponding weighted $l_{1}$ minimization
 method:
\begin{align}\label{f2}
 \underset{x\in \mathbb{R}^{N}}{\rm minimize}\quad&\|x\|_{1,\mathrm{w}}\ \ \
 {\rm subject\quad to}
 \ \ \ y-Ax\in \mathcal{B}  \notag \\
 {\rm with }\ \ \ \mathrm{w}_{i}&=\left\{\begin{array}{cc}
                     1, & i\in \widetilde{T}^{c} \\
                     \omega, &  i\in \widetilde{T}.
                   \end{array}
                   \right.
 \end{align}
where $0\leq \omega \leq 1 $ and $\widetilde{T}\subset\{1,2,\ldots,
N\}$ a given support estimate of unknown signal $x$. $\mathcal{B}$
is also a bounded set determined by the noise settings (\ref{b1})
and (\ref{b2}). Our goal is to generalize the results of Cai and
Zhang \cite{CZ} via the weighted $l_{1}$ minimization method
(\ref{f2}). We establish the high order RIP condition for the stable
and robust recovery of signals with partially known support
information from (\ref{m1}). We also show that the recovery by
weighted $l_{1}$ minimization method (\ref{f2}) is stable and robust
under weaker sufficient conditions compared to the standard $l_{1}$
minimization method (\ref{f4}) when we have the partial support
information with accuracy better than $50\%$.

The rest of the paper is organized as follows. In Section \ref{2},
we will introduce some notations and some basic lemmas that will be
used. The main results  are given in Section \ref{3}, and the proofs
of our main results are presented in Section \ref{4}.

\section{Preliminaries}\label{2}
Let us begin with basic notations. For arbitrary
$x\in\mathbb{R}^{N}$, let $x_{k}$ be its best $k-$term
approximation. $x_{\max(k)}$ is defined as $x$ with all but the
largest $k$ entries in absolute value set to zero, and
$x_{-\max(k)}=x-x_{\max(k)}$. Let $T_{0}$ be the support of $x_{k}$,
i.e., $T_{0}={\rm supp}(x_{k})$, with $T_{0}\subseteq
\{1,\ldots,N\}$ and $|T_{0}|\leq k$. Let $\widetilde{T}\subseteq
\{1, \ldots, N\}$ be the support estimate of $x$ with
$|\widetilde{T}|=\rho k$, where $\rho \geq 0$ represents the ratio
of the size of the estimated support to the size of the actual
support of $x_k$ (or the support of $x$ if $x$ is $k-$ sparse).
Denote $\widetilde{T}_{\alpha}=T_{0}\cap \widetilde{T}$ and
$\widetilde{T}_{\beta}=T_{0}^{c}\cap \widetilde{T}$ with
$|\widetilde{T}_{\alpha}|=\alpha |\widetilde{T}|=\alpha\rho k$ and
$|\widetilde{T}_{\beta}|=\beta |\widetilde{T}|=\beta\rho k$, where
$\alpha$ denotes the ratio of the number of indices in $T_0$ that
were accurately estimated in $\widetilde{T}$ to the size of
$\widetilde{T}$ and $\alpha+\beta=1$. For arbitrary nonnegative
number $\xi$, we denote by $[[\xi]]$ an integer satisfying $\xi\leq
[[\xi]] <\xi+1.$

Cai and Zhang developed a new elementary technique which is a key
technical tool for the proof of the main result (see Theorem 3.1).
It states that any point in a polytope can be represented as a
convex combination of sparse vectors (\cite{CZ}, Lemma 1.1). Another
 key technical tool for our proof was Lemma \ref{l2} introduced by Cai and Zhang  (\cite{CZ1}, Lemma 5.3).
 The specific contents are presented in Lemmas \ref{l1} and \ref{l2}, respectively.
\begin{Lemma}[\cite{CZ}, Lemma 1.1]\label{l1}
For a positive number $\alpha$ and a positive integer $k$, define
the polytope $T(\alpha, k)\subset\mathbb{R}^{d}$ by
$$T(\alpha, k)=\{v\in\mathbb{R}^{d}: \|v\|_{\infty}\leq \alpha, \|v\|_{1}\leq k\alpha\}.$$
For any $v\in \mathbb{R}^{d}$, define the set of sparse vectors
$U(\alpha, k, v)\subset\mathbb{R}^{d}$ by
\begin{align*}
U(\alpha, k, v)=\{&u\in\mathbb{R}^{d}: supp(u)\subseteq supp(v), \|u\|_{0}\leq k, \\
&\|u\|_{1}=\|v\|_{1}, \|u\|_{\infty}\leq\alpha\},\end{align*} where
$\|u\|_{0}=|supp(u)|$. Then any $v\in T(\alpha, k)$ can be expressed
as
$$v=\sum\limits_{i=1}^{N}\lambda_{i}u_{i},$$
where $u_{i}\in U(\alpha, k, v)$ and $0\leq \lambda_{i}\leq 1,
\sum\limits_{i=1}^{N}\lambda_{i}=1.$
\end{Lemma}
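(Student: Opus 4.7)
The plan is to identify $v$ as an element of a bounded convex polytope whose extreme points all lie in $U(\alpha, k, v)$, and then invoke Carath\'eodory's theorem. To cut down on bookkeeping, I first observe that any decomposition $v = \sum_i \lambda_i u_i$ satisfying $\|v\|_1 = \sum_i \lambda_i \|u_i\|_1$ forces, via the triangle inequality applied coordinate-wise, each $u_i$ to agree in sign with $v$ wherever $u_i$ is nonzero. Hence without loss of generality I can assume $v \geq 0$, recovering the general case by flipping signs of coordinates.

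With $v \geq 0$, I would work with the polytope
\[
P := \Bigl\{u \in \mathbb{R}^d : \mathrm{supp}(u) \subseteq \mathrm{supp}(v),\ 0 \leq u_i \leq \alpha \ \forall i,\ \sum\nolimits_i u_i = \|v\|_1 \Bigr\},
\]
which is compact and contains $v$. The core step is to show that every extreme point of $P$ belongs to $U(\alpha, k, v)$. Restricting attention to the block $T := \mathrm{supp}(v)$, the polytope $P$ lies in the $(|T|-1)$-dimensional affine hyperplane $\sum_{i \in T} u_i = \|v\|_1$ and is further cut out by the $2|T|$ box inequalities. An extreme point must make $|T|-1$ of these box inequalities tight, so every coordinate equals $0$ or $\alpha$ except for at most one fractional coordinate in $(0,\alpha)$. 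Writing $r$ for the number of coordinates saturating $u_i = \alpha$: if no fractional coordinate is present then $r\alpha = \|v\|_1 \leq k\alpha$ gives $r \leq k$; if a fractional coordinate $u_{i^\ast} \in (0,\alpha)$ is present then $r\alpha < \|v\|_1 \leq k\alpha$ gives $r \leq k-1$, so the support size $r+1$ is still at most $k$.

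Once the extreme points are pinned down, Carath\'eodory's theorem represents $v \in P$ as a convex combination of finitely many extreme points of $P$, each of which sits in $U(\alpha, k, v)$, which is the conclusion. The step I expect to be the main obstacle is the discrete extreme-point count, particularly the arithmetic handling of the borderline regime $\|v\|_1 = k\alpha$ so that the bound comes out as $\|u\|_0 \leq k$ rather than $k+1$, together with a careful verification of the sign-alignment observation that makes the reduction to nonnegative $v$ lossless.
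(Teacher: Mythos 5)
Your argument is correct, but it is a genuinely different route from the one in the literature: this paper does not reprove the lemma at all (it is quoted verbatim from Cai--Zhang \cite{CZ}), and the original proof there proceeds by induction on $\|v\|_{0}$, explicitly rewriting a vector with $l+1$ nonzero entries as a convex combination of vectors with $l$ nonzero entries while preserving the $\ell_{1}$ norm and the $\ell_{\infty}$ bound, until the sparsity drops to $k$. You instead embed $v$ (after the sign-flip reduction) in the slice-of-a-box polytope $P=\{u:\operatorname{supp}(u)\subseteq\operatorname{supp}(v),\ 0\le u_{i}\le\alpha,\ \sum_{i}u_{i}=\|v\|_{1}\}$, classify its vertices (all coordinates in $\{0,\alpha\}$ except at most one), check the vertex support bound including the borderline case $\|v\|_{1}=k\alpha$, and conclude by Minkowski/Carath\'eodory. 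Both proofs are complete; yours is shorter and conceptually cleaner (it explains \emph{why} the representing vectors can be taken $k$-sparse: they are vertices of a transportation-type polytope), while the inductive construction is self-contained, effectively algorithmic, and pins down the number of terms in the combination, which some refinements of the technique exploit. Two small remarks: your opening observation that any norm-preserving decomposition forces sign agreement is true but irrelevant to the reduction --- the reduction to $v\ge 0$ is justified purely by the sign-flip symmetry of $T(\alpha,k)$ and $U(\alpha,k,v)$, which you do state; and you should note the degenerate case $v=0$, where $T=\emptyset$ and the claim is trivial.
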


\begin{Lemma}[\cite{CZ1}, Lemma 5.3]\label{l2}
Assume $m\geq k$, $a_{1}\geq a_{2}\geq\cdots\geq a_{m}\geq 0$,
$\sum\limits_{i=1}^{k}a_{i}\geq \sum\limits_{i=k+1}^{m}a_{i},$ then
for all $\alpha\geq 1$,
$$\sum\limits_{j=k+1}^{m}a_{j}^{\alpha}\leq \sum\limits_{i=1}^{k}a_{i}^{\alpha}.$$
More generally, assume $a_{1}\geq a_{2}\geq\cdots\geq a_{m}\geq 0$,
$\lambda\geq 0$ and $\sum\limits_{i=1}^{k}a_{i}+\lambda\geq
\sum\limits_{i=k+1}^{m}a_{i},$ then for all $\alpha\geq 1$,
$$\sum\limits_{j=k+1}^{m}a_{j}^{\alpha}\leq k\Big(\sqrt[\alpha]{\frac{\sum_{i=1}^{k}a_{i}^{\alpha}}{k}}+\frac{\lambda}{k}\Big)^{\alpha}.$$
\end{Lemma}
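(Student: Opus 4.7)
The plan is to establish the first inequality via a monotonicity sandwich, then derive the more general one by shifting the large entries and applying Minkowski.

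For the first part I would exploit the single fact that, since the sequence is nonincreasing, $a_j \leq a_k \leq a_i$ for $j > k \geq i$, and $x \mapsto x^{\alpha-1}$ is nondecreasing on $[0,\infty)$ because $\alpha \geq 1$. Write $a_j^{\alpha} = a_j \cdot a_j^{\alpha-1}$ and pull the maximum out, then use the hypothesis, then pull the minimum back in:
$$\sum_{j=k+1}^{m} a_j^{\alpha} \leq a_k^{\alpha-1} \sum_{j=k+1}^{m} a_j \leq a_k^{\alpha-1} \sum_{i=1}^{k} a_i \leq \sum_{i=1}^{k} a_i \cdot a_i^{\alpha-1} = \sum_{i=1}^{k} a_i^{\alpha}.$$
(If $a_k = 0$ the tail terms all vanish and the inequality is trivial.)

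For the second part I would reduce to the first by absorbing $\lambda$ into the head. Set $b_i = a_i + \lambda/k$ for $1 \leq i \leq k$ and $b_j = a_j$ for $k < j \leq m$. Monotonicity survives the shift, since $b_k = a_k + \lambda/k \geq a_k \geq a_{k+1} = b_{k+1}$, and by construction $\sum_{i=1}^{k} b_i = \sum_{i=1}^{k} a_i + \lambda \geq \sum_{j=k+1}^{m} a_j = \sum_{j=k+1}^{m} b_j$. Applying the already-proved first part to $\{b_i\}$ gives
$$\sum_{j=k+1}^{m} a_j^{\alpha} \leq \sum_{i=1}^{k}\bigl(a_i + \lambda/k\bigr)^{\alpha}.$$
Now apply Minkowski's inequality in $\ell^{\alpha}$ to the vectors $(a_1,\ldots,a_k)$ and $(\lambda/k,\ldots,\lambda/k)$:
$$\left(\sum_{i=1}^{k}(a_i + \lambda/k)^{\alpha}\right)^{1/\alpha} \leq \left(\sum_{i=1}^{k} a_i^{\alpha}\right)^{1/\alpha} + k^{1/\alpha}\cdot \frac{\lambda}{k}.$$
Raising to the $\alpha$-th power and factoring out $k^{1/\alpha}$ inside the bracket yields exactly $k\bigl(\sqrt[\alpha]{\sum_i a_i^{\alpha}/k} + \lambda/k\bigr)^{\alpha}$, which is the desired bound.

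There is no real obstacle; the only point that needs attention is verifying that the shifted sequence $\{b_i\}$ remains nonincreasing so that the first part is applicable, and that the factor $k^{1/\alpha}$ from Minkowski combines cleanly with $\lambda/k$ after raising to the $\alpha$-th power. Both are routine.
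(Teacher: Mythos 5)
Your proof is correct. Note that the paper itself gives no proof of this lemma --- it is imported verbatim from Cai and Zhang \cite{CZ1} (their Lemma 5.3) --- so there is nothing internal to compare against; your argument is a legitimate self-contained replacement. Both halves check out: the sandwich $\sum_{j>k}a_j^{\alpha}\leq a_k^{\alpha-1}\sum_{j>k}a_j\leq a_k^{\alpha-1}\sum_{i\leq k}a_i\leq\sum_{i\leq k}a_i^{\alpha}$ is exactly the standard proof of the first claim, and for the second claim your reduction (shift the head by $\lambda/k$, verify monotonicity survives, apply part one, then Minkowski in $\ell^{\alpha}$) is clean and the bookkeeping with the factor $k^{1/\alpha}$ works out to precisely $k\bigl(\sqrt[\alpha]{\sum_i a_i^{\alpha}/k}+\lambda/k\bigr)^{\alpha}$. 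The argument in \cite{CZ1} reaches the same bound slightly differently, by bounding $a_{k+1}\leq\frac{1}{k}\sum_{i\leq k}a_i$ and finishing with the power-mean inequality $\frac{1}{k}\sum_i a_i\leq\bigl(\frac{1}{k}\sum_i a_i^{\alpha}\bigr)^{1/\alpha}$ rather than Minkowski; the two routes are of essentially equal weight, and yours has the mild advantage of deriving the $\lambda$-version formally from the $\lambda=0$ case.
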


As we mentioned in the introduction, Cai and Zhang \cite{CZ}
provided the sharp sufficient condition to
 recover sparse signals and approximately sparse signals via $l_{1}$ minimization (\ref{f4}).
 Their main result can be stated as below.
\begin{Theorem}[\cite{CZ}, Theorem 2.1]\label{t1}
Let $y=Ax+z$ with $\|z\|_{2}\leq \varepsilon$ and
$\widehat{x}^{l_{2}}$ is the minimizer of {\rm(\ref{f4})} with
$\mathcal{B}=\mathcal{B}^{l_{2}}(\eta)=\{z:\|z\|_{2}\leq \eta\}$ for
some $\eta \geq \varepsilon$. If
\begin{align}\label{g1}
 \delta_{tk}<\sqrt{\frac{t-1}{t}}
\end{align}
for some $t\geq 4/3$, then
\begin{align}\label{g2}
  \|\widehat{x}^{l_{2}}-x\|_{2}\leq C_{0}(\varepsilon+\eta)
  +C_{1}\frac{2\|x_{-\max{(k)}}\|_{1}}{\sqrt{k}},
\end{align}
where
\begin{align}\label{g21}
  C_{0}&=\frac{\sqrt{2t(t-1)(1+\delta_{tk})}}{t(\sqrt{(t-1)/t}-\delta_{tk})}, \notag\\
   C_{1}&=\frac{\sqrt{2}\delta_{tk}+\sqrt{t(\sqrt{(t-1)/t}-\delta_{tk})\delta_{tk}}}{t(\sqrt{(t-1)/t}-\delta_{tk})}+1.
\end{align}
Let $y=Ax+z$ with $\|A^{T}z\|_{\infty}\leq \varepsilon$ and
$\widehat{x}^{DS}$ is the minimizer of {\rm(\ref{f4})} with
$\mathcal{B}=\mathcal{B}^{DS}(\eta)=\{z:\|A^{T}z\|_{\infty}\leq
\eta\}$ for some $\eta \geq \varepsilon$. If
$\delta_{tk}<\sqrt{\frac{t-1}{t}}$
 for some $t\geq 4/3$, then
\begin{align}\label{g3}
  \|\widehat{x}^{DS}-x\|_{2}\leq  C'_{0}(\varepsilon+\eta)
  + C'_{1}\frac{2\|x_{-\max{(k)}}\|_{1}}{\sqrt{k}},
\end{align}
where
\begin{align}\label{g22}
  C'_{0}=\frac{\sqrt{2t^{2}(t-1)k}}{t(\sqrt{(t-1)/t}-\delta_{tk})},\ \ \ C'_{1}=C_{1}.
\end{align}
\end{Theorem}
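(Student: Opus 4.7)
The plan is to mimic the now-standard Cai--Zhang route for RIP-based recovery, treating the $\ell_2$ and Dantzig-selector cases in parallel and splitting into bounding the error in the two pieces determined by the support $T_0$ of $x_{\max(k)}$. Set $h=\widehat{x}-x$. In the $\ell_2$ setting the triangle inequality applied to the feasibility conditions $\|y-Ax\|_2\le\varepsilon$ and $\|y-A\widehat{x}\|_2\le\eta$ gives $\|Ah\|_2\le\varepsilon+\eta$; in the Dantzig setting the same argument in the $\|A^T\cdot\|_\infty$ norm gives $\|A^T Ah\|_\infty\le\varepsilon+\eta$. Since $\widehat{x}$ is $\ell_1$-optimal and $x$ is feasible, $\|\widehat{x}\|_1\le\|x\|_1$; splitting by $T_0$ and using $\|x_{T_0^c}\|_1=\|x_{-\max(k)}\|_1$ yields the cone condition
\[
\|h_{T_0^c}\|_1\le \|h_{T_0}\|_1+2\|x_{-\max(k)}\|_1.
\]

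Next I would invoke Lemma \ref{l1} to decompose $h_{T_0^c}$. Set $s=(t-1)k$ and $\alpha=\bigl(\|h_{T_0}\|_2/\sqrt{k}\bigr)+\bigl(2\|x_{-\max(k)}\|_1/(sk)\bigr)$-type threshold; using Lemma \ref{l2} with the cone condition, one verifies that $h_{T_0^c}$ lies in a polytope $T(\alpha,s)$, hence can be written as a convex combination $h_{T_0^c}=\sum_i\lambda_i u_i$ where each $u_i$ is $s$-sparse, supported on $T_0^c$, and has $\|u_i\|_\infty\le\alpha$ and $\|u_i\|_2\le\sqrt{s}\,\alpha$. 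The point of this is that for every $i$ the vector $h_{T_0}\pm u_i$ is $tk$-sparse, so the RIP constant $\delta_{tk}$ controls $\|A(h_{T_0}\pm u_i)\|_2^2$.

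The analytic core is the Cai--Zhang identity: form a weighted combination of $\|A(h_{T_0}+u_i)\|_2^2$, $\|A(h_{T_0}-u_j)\|_2^2$ and the cross-terms $\langle Ah,A(h_{T_0}+u_i)\rangle$, average over $i,j$ with weights $\lambda_i$, and use $\sum_i\lambda_i Au_i=Ah_{T_0^c}=Ah-Ah_{T_0}$. After applying RIP to the $tk$-sparse pieces and grouping terms, the coefficient of $\|h_{T_0}\|_2^2$ turns out to be a positive multiple of $\sqrt{(t-1)/t}-\delta_{tk}$, which is strictly positive precisely under the hypothesis; in the $\ell_2$ case the residual is $\|Ah\|_2\|h_{T_0}\|_2$ times an explicit constant involving $\sqrt{1+\delta_{tk}}$, while in the Dantzig case the Cauchy--Schwarz--type bound $|\langle A^TAh,h_{T_0}+u_i\rangle|\le\|A^TAh\|_\infty\|h_{T_0}+u_i\|_1\le\|A^TAh\|_\infty\sqrt{tk}\,\|h_{T_0}+u_i\|_2$ produces the extra $\sqrt{tk}$ factor visible in $C_0'$. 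Solving this quadratic-in-$\|h_{T_0}\|_2$ inequality yields the sought estimate for $\|h_{T_0}\|_2$, with constants matching (\ref{g21}) and (\ref{g22}).

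Finally, to pass from $\|h_{T_0}\|_2$ to $\|h\|_2$, I would apply Lemma \ref{l2} to the sorted magnitudes of $h_{T_0^c}$, using the cone condition as the ``$\lambda$'' slack term, to get $\|h_{T_0^c}\|_2\le \|h_{T_0}\|_2+2\|x_{-\max(k)}\|_1/\sqrt{k}$, and then $\|h\|_2^2=\|h_{T_0}\|_2^2+\|h_{T_0^c}\|_2^2$ finishes both (\ref{g2}) and (\ref{g3}). The main obstacle is the bookkeeping in the averaging step: one must choose the weights and scales so that the coefficient of $\|h_{T_0}\|_2^2$ is exactly $t\bigl(\sqrt{(t-1)/t}-\delta_{tk}\bigr)$ up to the factors appearing in the denominators of $C_0,C_1,C_0'$; everything else is algebra. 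The constraint $t\ge 4/3$ enters only to guarantee $s=(t-1)k\ge k/3$ is large enough that the sparse representation from Lemma \ref{l1} meshes with the $tk$-sparse RIP bound, which is where the sharpness of the threshold $\sqrt{(t-1)/t}$ ultimately comes from.
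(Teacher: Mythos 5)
Your outline follows the same route as the paper's proof of Theorem \ref{t3} (of which Theorem \ref{t1} is the special case $\omega=1$, so $d=1$, $\gamma=1$): cone condition, sparse representation of the tail via Lemma \ref{l1}, the Cai--Zhang square identity, a quadratic inequality in the head norm, and Lemma \ref{l2} to pass to $\|h\|_2$. However, there is a genuine gap at the decomposition step. You apply Lemma \ref{l1} directly to $h_{T_0^c}$, asserting it lies in $T(\alpha,s)$ with $s=(t-1)k$ and $\alpha$ of order $r/(t-1)$, where $kr\geq\|h_{T_0^c}\|_1$. Membership in $T(\alpha,s)$ requires \emph{both} $\|h_{T_0^c}\|_1\leq s\alpha$ \emph{and} $\|h_{T_0^c}\|_\infty\leq\alpha$, and the second constraint can fail badly: $T_0$ is the support of $x_{\max(k)}$, not of $h_{\max(k)}$, so $h_{T_0^c}$ may concentrate almost all of its $\ell_1$ mass on a single coordinate, making $\|h_{T_0^c}\|_\infty$ comparable to $kr$ rather than to $r/(t-1)$. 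Lemma \ref{l2} does not repair this; it gives no $\ell_\infty$ control. The missing idea is the two-part split of the tail: set $h_{-\max(k)}=h^{(1)}+h^{(2)}$, where $h^{(1)}$ collects the entries of magnitude exceeding $r/(t-1)$ (there are at most $(t-1)k$ of them, by the $\ell_1$ bound), absorb $h^{(1)}$ into the head so that the quantity actually estimated by the quadratic inequality is $X=\|h_{\max(k)}+h^{(1)}\|_2$ (still $tk$-sparse together with each $u_i$), and apply Lemma \ref{l1} only to $h^{(2)}$, which genuinely satisfies both polytope constraints. Without this split the vectors $u_i$ and the crucial bound $\|u_i\|_2\leq\sqrt{k/(t-1)}\,r$ are not available, and the rest of the argument cannot be run.

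Two smaller points. First, the hypothesis $t\geq 4/3$ plays no role in the sufficiency proof --- the argument works for every $t>1$ --- and it is not what makes the representation ``mesh'' with the $tk$-sparse RIP bound; as the paper notes after Theorem \ref{t1}, $t\geq 4/3$ is only needed for the sharpness of the threshold $\sqrt{(t-1)/t}$, which is a separate lower-bound construction (cf.\ Theorem \ref{t4}). Second, when $tk$ is not an integer one needs the standard remark of replacing $t$ by $\lceil tk\rceil/k$; your sketch silently assumes $tk\in\mathbb{Z}$.
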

Note that Theorem \ref{t1} always holds for $ t>1$, and the
condition $t\geq 4/3$ ensures that (\ref{g1}) is sharp.

 Friedlander et al. \cite{FMSY} used the prior support
information to recover any signals
 by weighted $l_{1}$ minimization (\ref{f2}). The following theorem was showed in \cite{FMSY}.
\begin{Theorem}[\cite{FMSY}, Theorem 3]\label{t2}
Let $x\in \mathbb{R}^{N}$ be an arbitrary signal and $y=Ax+z$ with
$\|z\|_{2}\leq \varepsilon$. Define $x_{k}$ be its best $k-$term
approximation with ${\rm supp\{x_{k}\}}=T_{0}$. Let
$\widetilde{T}\subseteq \{1, \ldots, N\}$ be an arbitrary set and
define $\rho \geq 0$ and $0\leq \alpha \leq 1$ such that
$|\widetilde{T}|=\rho k$ and $|\widetilde{T}\cap T_{0}|=\alpha \rho
k.$ Suppose that there exists an $a\in \frac{1}{k}\mathbb{Z}$ with
$a\geq(1-\alpha)\rho$ and $a>1$. If the measurement matrix $A$ has
RIP satisfying
\begin{align}\label{g4}
 \delta_{ak}+\frac{a}{\gamma^{2}}\delta_{(a+1)k}<\frac{a}{\gamma^{2}}-1,
\end{align}
where $\gamma=\omega+(1-\omega)\sqrt{1+\rho-2\alpha\rho}$ for some
given $0\leq \omega \leq 1$. Then the solution $\widehat{x}$ to
{\rm(\ref{f2})} with {\rm(\ref{b1})} obeys
\begin{align}\label{g5}
  \|\widehat{x}-x\|_{2} \leq C''_{0}(2\varepsilon)
  +C''_{1}\frac{2\left(\omega\|x-x_{k}\|_{1}+(1-\omega)\|x_{\widetilde{T}^{c}\cap T_{0}^{c}}\|_{1}\right)}{\sqrt{k}},
\end{align}
where
\begin{equation}\label{g6}
  \begin{split}
      C''_{0}&=\frac{1+\frac{\gamma}{\sqrt{a}}}
  {\sqrt{1-\delta_{(a+1)k}}-\frac{\gamma}{\sqrt{a}}\sqrt{1+\delta_{ak}}},  \\
C''_{1}&=\frac{a^{-1/2}\left(\sqrt{1-\delta_{(a+1)k}}+\sqrt{1+\delta_{ak}}\right)}
 {\sqrt{1-\delta_{(a+1)k}}-\frac{\gamma}{\sqrt{a}}\sqrt{1+\delta_{ak}}}.
   \end{split}
\end{equation}
\end{Theorem}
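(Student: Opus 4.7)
The plan is to follow the classical Cand\`es--Romberg--Tao paradigm, adapted for weights, and to use the support-estimate quality $(\rho,\alpha)$ only at the transition from the weighted cone constraint to an $l_2$ bound. Write $h = \widehat{x}-x$. Since both $x$ and $\widehat{x}$ are feasible for $\|y-Au\|_2 \leq \varepsilon$, one has the tube bound $\|Ah\|_2 \leq 2\varepsilon$. The minimality $\|\widehat{x}\|_{1,\mathrm{w}} \leq \|x\|_{1,\mathrm{w}}$ together with the split/triangle trick on $T_0, T_0^c$ gives $\|h_{T_0^c}\|_{1,\mathrm{w}} \leq \|h_{T_0}\|_{1,\mathrm{w}} + 2\|x_{T_0^c}\|_{1,\mathrm{w}}$, and the approximation term on the right coincides exactly with $\omega\|x-x_k\|_1+(1-\omega)\|x_{\widetilde T^c\cap T_0^c}\|_1$. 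Expanding both weighted norms according to $\widetilde{T}/\widetilde{T}^c$, moving the $(1-\omega)\|h_{\widetilde T_\beta}\|_1$ contribution across, and regrouping yields
\begin{equation*}
\|h_{T_0^c}\|_1 \leq \omega\|h_{T_0}\|_1 + (1-\omega)\|h_S\|_1 + 2\|x_{T_0^c}\|_{1,\mathrm{w}}, \qquad S := (T_0 \cap \widetilde{T}^c)\cup \widetilde{T}_\beta,
\end{equation*}
and the decisive combinatorial fact $|S| \leq (1+\rho-2\alpha\rho)k$ is what will manufacture $\gamma$.

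Applying Cauchy--Schwarz to the two $l_1$ norms and then dominating each of $\|h_{T_0}\|_2, \|h_S\|_2$ by $\|h_{T_0\cup\widetilde T_\beta}\|_2$ (both $T_0$ and $S$ sit inside $T_0 \cup \widetilde T_\beta$) produces the clean cone inequality
\begin{equation*}
\|h_{T_0^c}\|_1 \leq \sqrt{k}\bigl(\omega+(1-\omega)\sqrt{1+\rho-2\alpha\rho}\bigr)\|h_{T_0\cup\widetilde T_\beta}\|_2 + 2\|x_{T_0^c}\|_{1,\mathrm{w}} = \gamma\sqrt{k}\,\|h_{T_0\cup\widetilde T_\beta}\|_2 + 2\|x_{T_0^c}\|_{1,\mathrm{w}}.
\end{equation*}
Now partition $T_0^c$ into consecutive blocks $T_1,T_2,\dots$ of size $ak$ ordered by decreasing $|h|$. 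Because $a\geq(1-\alpha)\rho = \beta\rho$, one has $|\widetilde T_\beta| \leq ak$, so the largest $|\widetilde T_\beta|$ magnitudes of $h$ on $T_0^c$ lie inside $T_1$; hence $\|h_{\widetilde T_\beta}\|_2 \leq \|h_{T_1}\|_2$ and $\|h_{T_0\cup\widetilde T_\beta}\|_2 \leq \|h_{T_0\cup T_1}\|_2$. The classical mean-majorization argument gives the tail estimate $\sum_{j\geq 2}\|h_{T_j}\|_2 \leq \|h_{T_0^c}\|_1/\sqrt{ak}$.

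Writing $Ah_{T_0\cup T_1} = Ah - \sum_{j\geq 2}Ah_{T_j}$ and applying RIP of orders $(a+1)k$ on $T_0\cup T_1$ and $ak$ on each $T_j$ yields
\begin{equation*}
\sqrt{1-\delta_{(a+1)k}}\,\|h_{T_0\cup T_1}\|_2 \leq 2\varepsilon + \frac{\sqrt{1+\delta_{ak}}}{\sqrt{ak}}\bigl(\gamma\sqrt{k}\,\|h_{T_0\cup T_1}\|_2 + 2\|x_{T_0^c}\|_{1,\mathrm{w}}\bigr).
\end{equation*}
Hypothesis (\ref{g4}) is equivalent to $\sqrt{1-\delta_{(a+1)k}} - (\gamma/\sqrt{a})\sqrt{1+\delta_{ak}}>0$, so after moving the $\|h_{T_0\cup T_1}\|_2$-coefficient to the left the denominator appearing in $C''_0$ and $C''_1$ emerges. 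Finally $\|h\|_2 \leq \|h_{T_0\cup T_1}\|_2 + \sum_{j\geq 2}\|h_{T_j}\|_2$, combined with a second invocation of the cone bound, gives the announced error estimate. The main obstacle is the regrouping step: only with the particular set $S=(T_0\cap\widetilde T^c)\cup\widetilde T_\beta$ of cardinality $(1+\rho-2\alpha\rho)k$ does the weighted cone translate into the factor $\gamma^2$ that precisely matches (\ref{g4}); verifying $|\widetilde T_\beta| \leq ak$ from the assumption $a \geq (1-\alpha)\rho$ is the other combinatorial ingredient that welds the prior-support quality to the RIP order.
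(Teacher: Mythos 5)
Your proposal is correct, and it reproduces the constants in (\ref{g6}) exactly: the weighted cone constraint, once regrouped through $S=(T_0\cap\widetilde T^c)\cup\widetilde T_\beta$ with $|S|=|T_0\setminus\widetilde T_\alpha|+|\widetilde T_\beta|\leq(1+\rho-2\alpha\rho)k$, gives $\|h_{T_0^c}\|_1\leq\gamma\sqrt{k}\,\|h_{T_0\cup T_1}\|_2+2\|x_{T_0^c}\|_{1,\mathrm{w}}$ (using $|\widetilde T_\beta|=(1-\alpha)\rho k\leq ak$ to absorb $\widetilde T_\beta$ into $T_1$); combined with the tail bound $\sum_{j\geq2}\|h_{T_j}\|_2\leq\|h_{T_0^c}\|_1/\sqrt{ak}$ and the RIP sandwich on $Ah_{T_0\cup T_1}$ this yields precisely $C''_0$ and $C''_1$, and (\ref{g4}) is indeed equivalent to positivity of the common denominator. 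Be aware, though, that the paper does not prove Theorem \ref{t2}; it is quoted from \cite{FMSY}, and your argument is essentially a reconstruction of the original proof there, i.e.\ the weighted variant of the Cand\`es--Romberg--Tao block decomposition. The result the paper actually proves (Theorem \ref{t3}) starts from the same weighted cone inequality (\ref{g11}) but then abandons the block decomposition in favour of Cai and Zhang's polytope representation (Lemma \ref{l1}) together with the square identity (\ref{g18}). The comparison is instructive: the block approach is shorter and more elementary, but the crude step $\|Ah_{T_0\cup T_1}\|_2\leq\|Ah\|_2+\sum_{j\geq2}\|Ah_{T_j}\|_2$ costs it a condition (\ref{g4}) that couples two RIP constants of different orders and is not sharp, whereas the polytope technique produces the single high-order threshold (\ref{g23}), which the paper shows to be sharp. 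So your proof is the right one for the stated theorem, but it would not suffice for the paper's main result.
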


\begin{Remark}[\cite{FMSY}, Remarks 3.3 and 3.4]\label{r1}
If $A$ satisfies
\begin{align}\label{g7}
  \delta_{(a+1)k}<\delta_{a}^{\omega}:=\frac{a-\gamma^{2}}{a+\gamma^{2}},
\end{align}
 where
$\gamma=\omega+(1-\omega)\sqrt{1+\rho-2\alpha\rho}$, then Theorem
\ref{t2} holds with same constants. If
\begin{align}\label{g8}
   \delta_{2k}<\left(\sqrt{2}\gamma+1\right)^{-1},
\end{align}
then weighted $l_{1}$ minimization (\ref{f2}) with {\rm(\ref{b1})}
can stably and robustly recover  the original signal.
\end{Remark}
\section{Main results}\label{3}
\begin{Theorem}\label{t3}
Suppose that $x\in\mathbb{R}^{N}$ be an arbitrary signal and $x_{k}$
be its best $k-$term approximation supported on $T_{0}\subseteq \{1,
\ldots, N\}$ with $|T_{0}|\leq k$. Let $\widetilde{T}\subseteq \{1,
\ldots, N\}$ be an arbitrary set and denote $\rho \geq 0$ and $0\leq
\alpha \leq 1$ such that $|\widetilde{T}|=\rho k$ and
$|\widetilde{T}\cap T_{0}|=\alpha \rho k.$
 Let $y=Ax+z$ with $\|z\|_{2}\leq\varepsilon$ and $\widehat{x}^{l_{2}}$ is the minimizer of (\ref{f2}) with (\ref{b1}). If the measurement matrix $A$ satisfies  RIP with
\begin{align}\label{g23}
  \delta_{tk}<\delta_{t}^{\omega}:=\sqrt{\frac{t-d}{t-d+\gamma^{2}}}
\end{align}
 for $t> d$, where $\gamma=\omega+(1-\omega)\sqrt{1+\rho-2\alpha\rho}$ and
 \begin{align*}
   d=\left\{
 \begin{array}{cc}
   1, & \omega=1 \\
   1-\alpha\rho+a,  & 0\leq\omega<1
 \end{array}
 \right.
 \end{align*}
 with $a=\max{\{\alpha, \beta\}}\rho$.
Then
\begin{align}\label{g9}
\|\widehat{x}^{l_{2}}-x\|_{2} \leq D_{0}(2\varepsilon)
+D_{1}\frac{2\left(\omega\|x_{T_{0}^{c}}\|_{1}+(1-\omega)\|x_{\widetilde{T}^{c}\cap
T_{0}^{c}}\|_{1}\right)}{\sqrt{k}},
 \end{align}
where
\begin{align}\label{g24}
D_{0}=& \frac{\sqrt{2(t-d)(t-d+\gamma^{2})(1+\delta_{tk})}}{(t-d+\gamma^{2})(\sqrt{\frac{t-d}{t-d+\gamma^{2}}}-\delta_{tk})},\notag\\
D_{1}
=&\frac{\sqrt{2}\delta_{tk}\gamma+\sqrt{(t-d+\gamma^{2})(\sqrt{\frac{t-d}{t-d+\gamma^{2}}}-\delta_{tk})\delta_{tk}}}
{(t-d+\gamma^{2})(\sqrt{\frac{t-d}{t-d+\gamma^{2}}}-\delta_{tk})}
+\frac{1}{\sqrt{d}}.
\end{align}

  Let $y=Ax+z$ with $\|A^{T}z\|_{\infty}\leq\varepsilon$. Assume that $\widehat{x}^{DS}$ is the minimizer of (\ref{f2}) with (\ref{b2}) and the matrix $A$ satisfies  RIP (\ref{g23}).
Then
 \begin{align}\label{g10}
\|\widehat{x}^{DS}-x\|_{2} &\leq  D'_{0}(2\varepsilon)  + D'_{1}
\frac{2\left(\omega\|x_{T_{0}^{c}}\|_{1}+(1-\omega)\|x_{\widetilde{T}^{c}\cap
T_{0}^{c}}\|_{1}\right)}{\sqrt{k}},
\end{align}
where
\begin{equation}\label{g25}
  \begin{split}
    D'_{0}=\frac{\sqrt{2(t-d)(t-d+\gamma^{2})[[tk]]}}{(t-d+\gamma^{2})(\sqrt{\frac{t-d}{t-d+\gamma^{2}}}-\delta_{tk})},\ \
    D'_{1}=D_{1}.
   \end{split}
\end{equation}

\end{Theorem}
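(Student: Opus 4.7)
\medskip

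\noindent\textbf{Proof plan.} Let $h=\widehat{x}-x$. The first step is the feasibility bound: since $y=Ax+z$ and both $x$ and $\widehat{x}$ are feasible, $\|Ah\|_{2}\le 2\varepsilon$ in the $\ell_{2}$ setting, and $\|A^{T}Ah\|_{\infty}\le 2\varepsilon$ in the Dantzig selector setting. The second step is a weighted cone inequality derived from the optimality $\|\widehat{x}\|_{1,\mathrm{w}}\le \|x\|_{1,\mathrm{w}}$. Splitting each of $x,\widehat{x}$ over the four blocks $\widetilde{T}_{\alpha},\;T_{0}\setminus\widetilde{T},\;\widetilde{T}_{\beta},\;(T_{0}\cup\widetilde{T})^{c}$ and applying the reverse triangle inequality, I expect to obtain
\[
\omega\|h_{\widetilde{T}_{\beta}}\|_{1}+\|h_{(T_{0}\cup\widetilde{T})^{c}}\|_{1}\le \omega\|h_{\widetilde{T}_{\alpha}}\|_{1}+\|h_{T_{0}\setminus\widetilde{T}}\|_{1}+2\omega\|x_{T_{0}^{c}}\|_{1}+2(1-\omega)\|x_{\widetilde{T}^{c}\cap T_{0}^{c}}\|_{1}.
\]
Adding $(1-\omega)\|h_{\widetilde{T}_{\beta}}\|_{1}$ to both sides and applying Cauchy--Schwarz separately to $\|h_{T_{0}}\|_{1}\le\sqrt{k}\,\|h_{T_{0}}\|_{2}$ and to $\|h_{T_{0}\triangle\widetilde{T}}\|_{1}\le\sqrt{(1+\rho-2\alpha\rho)k}\,\|h_{T_{0}\triangle\widetilde{T}}\|_{2}$, the right-hand side collapses to $\gamma\sqrt{k}\,\|h_{T_{0}\cup\widetilde{T}}\|_{2}+2\omega\|x_{T_{0}^{c}}\|_{1}+2(1-\omega)\|x_{\widetilde{T}^{c}\cap T_{0}^{c}}\|_{1}$, with $\gamma=\omega+(1-\omega)\sqrt{1+\rho-2\alpha\rho}$. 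This gives the key estimate $\|h_{T_{0}^{c}}\|_{1}\le\gamma\sqrt{k}\,\|h_{T_{0}\cup\widetilde{T}}\|_{2}+2\omega\|x_{T_{0}^{c}}\|_{1}+2(1-\omega)\|x_{\widetilde{T}^{c}\cap T_{0}^{c}}\|_{1}$.

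\medskip

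\noindent Next I pick a head set $T_{1}\supseteq T_{0}$ of cardinality $[[dk]]$: when $\omega=1$ take $T_{1}=T_{0}$ so $d=1$; when $0\le\omega<1$ set $a=\max\{\alpha,\beta\}\rho$ and let $T_{1}$ consist of $T_{0}\setminus\widetilde{T}$ together with the larger of the two pieces $\widetilde{T}_{\alpha}$ and $\widetilde{T}_{\beta}$, giving $|T_{1}|=(1-\alpha\rho+a)k=dk$. This is the support structure that matches the $d$ appearing in the statement. Applying Lemma \ref{l2} to the sorted tail of $h$ on $T_{1}^{c}$ produces a bound for $\|h_{-\max(dk)}\|_{\infty}$ in terms of a uniform scalar $\mu$. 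Then Lemma \ref{l1} lets me write the tail $h_{T_{1}^{c}}$ (with $\|h_{T_{1}^{c}}\|_{\infty}\le\mu$ and $\|h_{T_{1}^{c}}\|_{1}\le(t-d)k\mu$) as a convex combination $\sum_{i}\lambda_{i}u_{i}$ of $(t-d)k$-sparse vectors supported in $T_{1}^{c}$ with $\|u_{i}\|_{\infty}\le\mu$. Each $h_{T_{1}}\pm u_{i}$ is then $tk$-sparse, so I can invoke the Cai--Zhang polarization identity
\[
\tfrac{1}{4}\bigl(\|A(h_{T_{1}}+u_{i})\|_{2}^{2}-\|A(h_{T_{1}}-u_{i})\|_{2}^{2}\bigr)=\langle Ah_{T_{1}},Au_{i}\rangle,
\]
together with the RIP of order $tk$, to derive a quadratic inequality in $\|h_{T_{1}}\|_{2}$ whose coefficients involve $\delta_{tk}$, $\gamma^{2}$ and $t-d$, and whose linear term contains $\|Ah\|_{2}\|h\|_{2}$ plus the two compressibility terms.

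\medskip

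\noindent Solving this quadratic for $\|h\|_{2}$ (after using $\|h\|_{2}^{2}\le\|h_{T_{1}}\|_{2}^{2}+(\gamma\sqrt{k}\|h_{T_{0}\cup\widetilde{T}}\|_{2}+\text{tail})^{2}/k$ to close the argument) yields precisely the bound (\ref{g9}) under the hypothesis $\delta_{tk}<\sqrt{(t-d)/(t-d+\gamma^{2})}$, with constants $D_{0}$ and $D_{1}$ as in (\ref{g24}). The DS case is handled identically except that $\langle A^{T}Ah,\,h_{T_{1}}+u_{i}\rangle$ is bounded by $\|A^{T}Ah\|_{\infty}\|h_{T_{1}}+u_{i}\|_{1}\le 2\varepsilon\sqrt{[[tk]]}\|h_{T_{1}}+u_{i}\|_{2}$, which replaces $\sqrt{1+\delta_{tk}}$ by $\sqrt{[[tk]]}$ in the noise constant and gives $D_{0}'$. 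I expect the hardest part to be step three: verifying that the choice of $T_{1}$ (in particular, choosing the larger of $\widetilde{T}_{\alpha},\widetilde{T}_{\beta}$) produces exactly the coefficient $t-d+\gamma^{2}$ after the polarization algebra, and checking that $\|h_{T_{0}\cup\widetilde{T}}\|_{2}\le\|h_{T_{1}}\|_{2}$ remains valid in both subcases so that the cone inequality feeds correctly into the RIP estimate.
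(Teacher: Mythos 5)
Your overall architecture matches the paper's: the Friedlander et al.\ weighted cone inequality, a head set of cardinality $dk$ with $d$ exactly as in the statement, the Cai--Zhang sparse representation of a polytope (Lemma \ref{l1}), and a quadratic inequality in the $\ell_{2}$ norm of the head. Two of your steps, however, would not go through as written.

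First, the identity you invoke, $\tfrac{1}{4}\bigl(\|A(h_{T_{1}}+u_{i})\|_{2}^{2}-\|A(h_{T_{1}}-u_{i})\|_{2}^{2}\bigr)=\langle Ah_{T_{1}},Au_{i}\rangle$, is the elementary polarization identity; combined with RIP it only yields $|\langle Ah_{T_{1}},Au_{i}\rangle|\le\delta_{tk}\|h_{T_{1}}\|_{2}\|u_{i}\|_{2}$, and together with $\|u_{i}\|_{2}\le\frac{\gamma}{\sqrt{t-d}}(\|h_{T_{1}}\|_{2}+P)$ this closes the quadratic only under $\delta_{tk}<\sqrt{t-d}/(\sqrt{t-d}+\gamma)$, which is strictly smaller than the claimed threshold $\sqrt{(t-d)/(t-d+\gamma^{2})}$ because $(\sqrt{t-d}+\gamma)^{2}>t-d+\gamma^{2}$. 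Reaching the sharp constant requires the different identity (25) of \cite{CZ}, namely $\sum_{i}\lambda_{i}\|A(\sum_{j}\lambda_{j}\beta_{j}-\frac{1}{2}\beta_{i})\|_{2}^{2}=\sum_{i}\frac{\lambda_{i}}{4}\|A\beta_{i}\|_{2}^{2}$ applied to $\beta_{i}=h_{\max(dk)}+h^{(1)}+\mu u_{i}$, with the free parameter optimized at $\mu=(\sqrt{(t-d)(t-d+\gamma^{2})}-(t-d))/\gamma^{2}$; this convex-combination trick is the essential idea your plan is missing.

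Second, collapsing the cone inequality to $\gamma\sqrt{k}\,\|h_{T_{0}\cup\widetilde{T}}\|_{2}$ cannot be fed into the head estimate, precisely for the reason you flag at the end: $|T_{0}\cup\widetilde{T}|=(1+\beta\rho)k\ge dk$, so $\|h_{T_{0}\cup\widetilde{T}}\|_{2}$ is not dominated by $\|h_{\max(dk)}\|_{2}$. The paper keeps the two Cauchy--Schwarz terms separate, $\omega\sqrt{k}\,\|h_{T_{0}}\|_{2}+(1-\omega)\sqrt{(1+\rho-2\alpha\rho)k}\,\|h_{T_{0}\cup\widetilde{T}\setminus\widetilde{T}_{\alpha}}\|_{2}$, and uses that each of $T_{0}$ and $T_{0}\cup\widetilde{T}\setminus\widetilde{T}_{\alpha}$ has cardinality at most $dk$ (this is exactly what forces $d=1-\alpha\rho+\max\{\alpha,\beta\}\rho$), so each $\ell_{2}$ norm separately is bounded by $\|h_{\max(dk)}\|_{2}$. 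A smaller repair: Lemma \ref{l1} cannot be applied to the whole tail $h_{-\max(dk)}$ with $\ell_{\infty}$ bound $r/(t-d)$, since that bound need not hold entrywise; the paper first splits the tail into the part $h^{(1)}$ exceeding $r/(t-d)$, shows it is at most $k(t-d)$-sparse and absorbs it into the head, and only decomposes the remainder $h^{(2)}$ by Lemma \ref{l1}.
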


\begin{Remark}
In Theorem \ref{t3}, every signal $x\in\mathbb{R}^{N}$ can be stably
and robustly recovered. And if $\mathcal{B}=\{0\}$ and $x$ is a
$k-$sparse vector, then Theorem \ref{t3} ensures exact recovery of
the signal $x$.
\end{Remark}

For Gaussian noise case,  the above results on the bounded noise
case can be directly applied to yield the corresponding results by
using the same argument as in \cite{CXZ,CWX1}. The concrete content
is stated as follows.
\begin{Remark}
Let $x\in\mathbb{R}^{N}$ be an arbitrary signal and $x_{k}$ be its
best $k-$term approximation supported on $T_{0}\subseteq \{1,
\ldots, N\}$ with $|T_{0}|\leq k$. Let $\widetilde{T}\subseteq \{1,
\ldots, N\}$ be an arbitrary set and define $\rho \geq 0$ and $0\leq
\alpha \leq 1$ such that $|\widetilde{T}|=\rho k$ and
$|\widetilde{T}\cap T_{0}|=\alpha \rho k.$ Assume that $z\sim
{\mathcal{N}}_{n}(0, \sigma^{2}I)$ in (\ref{m1}) and
$\delta_{tk}<\sqrt{\frac{t-d}{t-d+\gamma^{2}}}$ for $t> d$. Let
$\mathcal{B}^{l_{2}}=\{z: \|z\|_{2}\leq\sigma\sqrt{n+2\sqrt{n\log
n}}\}$ and $\mathcal{B}^{DS}=\{z:
\|A^{T}z\|_{\infty}\leq\sigma\sqrt{2\log N}\}$.
$\widehat{x}^{l_{2}}$ and $\widehat{x}^{DS}$ are the minimizer of
(\ref{f2}) with $\mathcal{B}^{l_{2}}$ and $\mathcal{B}^{DS}$,
respectively. Then, with probability at least $1-1/n$,
\begin{align*}
  \|\widehat{x}^{l_{2}}-x\|_{2}
&\leq D_{0}(2\sigma\sqrt{n+2\sqrt{n\log n}}) +D_{1}
\frac{2\left(\omega\|x_{T_{0}^{c}}\|_{1}+(1-\omega)\|x_{\widetilde{T}^{c}\cap
T_{0}^{c}}\|_{1}\right)}{\sqrt{k}},
\end{align*}
and
\begin{align*}
  \|\widehat{x}^{DS}-x\|_{2}
&\leq D'_{0}(2\sigma\sqrt{2\log N}) +D'_{1}
\frac{2\left(\omega\|x_{T_{0}^{c}}\|_{1}+(1-\omega)\|x_{\widetilde{T}^{c}\cap
T_{0}^{c}}\|_{1}\right)}{\sqrt{k}},
\end{align*}
 with probability at least $1-1/\sqrt{\pi\log N}$.
 Here
 $ d=\left\{
 \begin{array}{cc}
   1, & \omega=1 \\
   1-\alpha\rho+a, & 0\leq\omega<1
 \end{array}
 \right.$
 with $a=\max{\{\alpha, \beta\}}\rho$, and
 $\gamma=\omega+(1-\omega)\sqrt{1+\rho-2\alpha\rho}$.
\end{Remark}

\begin{Theorem}\label{t4}
Let $d=1$ and $t\geq
1+\frac{\left(1-\sqrt{1-\gamma^{2}}\right)^{2}}{\gamma^{2}+2\left(1-\sqrt{1-\gamma^{2}}\right)}$.
For any $\varepsilon>0$ and $k\geq \frac{6}{\varepsilon}$. Then
there exists a sensing matrix $A\in \mathbb{R}^{n\times N}$ with
$\delta_{tk}<\sqrt{\frac{t-d}{t-d+\gamma^{2}}}+\varepsilon$
 and some $k-$sparse signal $x_{0}$ such that
 \begin{description}
   \item[{\rm (1)}]In the noiseless case, i.e., $y=Ax_{0}$,
   the weighted $l_{1}$ minimization (\ref{f2}) can not exactly recover the $k-$sparse signal $x_{0}$, i.e., $\widehat{x}\neq x_{0}$,
   where $\widehat{x}$ is the solution to (\ref{f2}).
   \item[{\rm (2)}] In the noise case, i.e., $y=Ax_{0}+z$, for any bounded noise setting $\mathcal{B}$,
   the weighted $l_{1}$ minimization (\ref{f2}) can not stably recover the $k-$sparse signal $x_{0}$, i.e., $\widehat{x}\nrightarrow x_{0}$ as $z\rightarrow 0$,
   where $\widehat{x}$ is the solution to (\ref{f2}).
 \end{description}
\end{Theorem}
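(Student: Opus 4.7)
The plan is to adapt the explicit counterexample construction of Cai and Zhang \cite{CZ} to the weighted setting. Specifically, I will exhibit three ingredients: (a) a $k$-sparse signal $x_{0}$, (b) a nonzero vector $h$ with $\|x_{0}+h\|_{1,\mathrm{w}} \leq \|x_{0}\|_{1,\mathrm{w}}$ whose support has size at most $(t+1)k$, and (c) a sensing matrix $A$ with $Ah = 0$ and $\delta_{tk}(A) < \sqrt{(t-1)/(t-1+\gamma^{2})} + \varepsilon$. From these, both conclusions follow at once: in the noiseless case $x_{0}+h$ is feasible for (\ref{f2}) with no larger weighted objective, so the minimizer $\widehat{x}$ cannot coincide with $x_{0}$; in the noisy case, feasibility of both $x_{0}$ and $x_{0}+h$ as $\|z\| \to 0$ forces any sequence of minimizers to stay a fixed positive distance from $x_{0}$ for every bounded noise set $\mathcal{B}$.

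For the signal and perturbation, I will arrange coordinates so that $T_{0}=\{1,\ldots,k\}$, with $\widetilde{T}_{\alpha}\subset T_{0}$ occupying a prefix and $\widetilde{T}_{\beta}$ lying in a disjoint block inside $T_{0}^{c}$. Take $x_{0}$ with unit magnitudes on $T_{0}$, and let $h$ equal $-x_{0}$ on $T_{0}$ together with a uniform value $u>0$ on a set $S\subset T_{0}^{c}\setminus \widetilde{T}$ of cardinality $(t-1)k$. Then $x_{0}+h$ is supported on $S$, and the weighted-norm inequality $\|x_{0}+h\|_{1,\mathrm{w}}\leq \|x_{0}\|_{1,\mathrm{w}}$ reduces to an explicit scalar inequality in $u$ and $t$ whose coefficients involve $\omega$, $\rho$, $\alpha$ through $\gamma=\omega+(1-\omega)\sqrt{1+\rho-2\alpha\rho}$. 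The lower bound $t\geq 1+(1-\sqrt{1-\gamma^{2}})^{2}/(\gamma^{2}+2(1-\sqrt{1-\gamma^{2}}))$ is exactly what is needed for this inequality to admit a value of $u$ that simultaneously makes the RIP computation below land at the sharp threshold.

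For the matrix, I will use the standard lifting trick of \cite{CZ}: after fixing $h$, construct $A$ by choosing orthonormal column directions on the support of $h$ that annihilate $h$, extended by an isometry off that support; this reduces $\delta_{tk}(A)$ to the largest deviation of the singular values of a rank-one perturbation of the identity on any $tk$-column submatrix. A direct calculation with the chosen $u$ gives $\delta_{tk}(A) = \sqrt{(t-1)/(t-1+\gamma^{2})} + O(1/k)$, and the hypothesis $k\geq 6/\varepsilon$ absorbs the $O(1/k)$ slack into $\varepsilon$. The main obstacle is this last coupled calculation: the single parameter $u$ must simultaneously make the weighted $\ell_{1}$ comparison nonstrict and the extremal eigenvalue of the Gram submatrix equal to the target threshold, and it is precisely the algebraic content of the stated lower bound on $t$ that lets both equations be satisfied at once. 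Verifying the noise-case conclusion is then a one-line continuity argument, since $x_{0}$ and $x_{0}+h$ are both in the $\mathcal{B}$-feasible set for any $z$ and any translate of a bounded $\mathcal{B}$.
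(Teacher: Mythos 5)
Your overall architecture --- a $k$-sparse $x_{0}$, a nonzero null vector $h$ of $A$ with $\|x_{0}+h\|_{1,\mathrm{w}}\leq\|x_{0}\|_{1,\mathrm{w}}$, and $A$ a scaled orthogonal projection annihilating $h$ --- is exactly the paper's, but your quantitative choice of the spread set $S$ is fatal. If $|S|=(t-1)k$ and $h$ equals $-x_{0}$ on $T_{0}$ and $u$ on $S$, then $h$ is supported on $T_{0}\cup S$, a set of exactly $tk$ coordinates, so $h$ is $tk$-sparse; since $Ah=0$ and $h\neq 0$, the RIP inequality $(1-\delta_{tk})\|h\|_{2}^{2}\leq\|Ah\|_{2}^{2}=0$ forces $\delta_{tk}\geq 1$, which is incompatible with $\delta_{tk}<\sqrt{(t-1)/(t-1+\gamma^{2})}+\varepsilon$ for small $\varepsilon$. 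The same objection rules out any $h$ of support size at most $tk$, and your fallback bound $(t+1)k$ is also insufficient in general. In the paper the tail of the (normalized) null vector $x_{1}$ is spread with equal magnitude $k/m'$ over $m\approx m'=\frac{1+\sqrt{1-\gamma^{2}}}{\gamma^{2}}\bigl(t-1+\sqrt{(t-1)(t-1+\gamma^{2})}\bigr)k$ coordinates, chosen precisely so that $\|x_{1,\max(\lceil tk\rceil)}\|_{2}^{2}$ equals $\frac{2s}{1+s}+O(1/k)$ with $s=\sqrt{(t-1)/(t-1+\gamma^{2})}$, which is what pins $\delta_{tk}$ just below the threshold; the $k\geq 6/\varepsilon$ hypothesis absorbs only the $O(1/k)$ error, not a change in the leading constant. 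Since $m'/k\to\infty$ as $\gamma\to 0$ for fixed $t$ (and already exceeds $(t+1)$ for moderate $t$ when $\gamma=1$), no support bound of the form $(t+c)k$ with fixed $c$ can work.

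Two further discrepancies. First, the hypothesis $t\geq 1+\frac{(1-\sqrt{1-\gamma^{2}})^{2}}{\gamma^{2}+2(1-\sqrt{1-\gamma^{2}})}$ is not a compatibility condition between the weighted-$l_{1}$ inequality and the RIP computation, as you suggest; it is exactly equivalent to $m'\geq k$, which guarantees that the $k$ unit entries are the largest entries of $x_{1}$, so that the best $k$-term part of the null vector is proportional to $x_{0}$ and the restricted-energy computation above is the relevant one. Second, placing $S$ inside $T_{0}^{c}\setminus\widetilde{T}$ gives the tail full weight $1$ in $\|\cdot\|_{1,\mathrm{w}}$; the paper's $\eta_{0}$ instead routes part of the tail through $\widetilde{T}\setminus T_{0}$ (the erroneous portion of the support estimate), where the weight is only $\omega$. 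That placement is where the $\gamma$-dependence of the sharp threshold --- and hence the claimed improvement over the unweighted bound $\sqrt{(t-1)/t}$ when $\alpha>\frac{1}{2}$ --- has to come from; with your placement one would at best certify sharpness of the $\gamma=1$ threshold. Your noise-case continuity argument is fine once a correct pair $(x_{0},h)$ is in hand, but as written the construction does not produce one.
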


\begin{Proposition}
\begin{description}
  \item[{\rm (1)}] If $\omega=1$, then $d=1$ and $\gamma=1$. The sufficient condition (\ref{g23})
   of Theorem \ref{t3} is identical to (\ref{g1}) in Theorem \ref{t1}
   and $D_{0}=C_{0}, D_{1}=C_{1}, D'_{0}=C'_{0}, D'_{1}=C'_{1}$.
   Moreover, the condition is sharp if $t\geq \frac{4}{3}$.
  \item[{\rm (2)}]If $\alpha=\frac{1}{2}$, then $d=1, \gamma=1$. The sufficient condition (\ref{g23}) of Theorem \ref{t3} is identical to that of Theorem \ref{t1} with (\ref{g1}) and $D_{0}=C_{0}, D_{1}=C_{1}, D'_{0}=C'_{0}, D'_{1}=C'_{1}$. Moreover, if $t\geq \frac{4}{3}$, the condition is sharp.
  \item[{\rm (3)}] Assume that $0\leq\omega<1$. If $\alpha>\frac{1}{2}$, then $d=1$ and $\gamma<1$. The sufficient condition (\ref{g23}) in Theorem \ref{t3}
  is weaker than (\ref{g1}) in Theorem \ref{t1} and (\ref{g7}) in Remark
  \ref{r1}. Then $D_{0}<C_{0},~ D_{1}<C_{1},~ D_{0}<C''_{0},~ D_{1}<C''_{1}$.
  When $t=2$, the sufficient condition (\ref{g23}) in Theorem \ref{t3}
  is weaker than (\ref{g8}) in Remark \ref{r1}.
\end{description}
\end{Proposition}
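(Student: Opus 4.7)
The plan is to establish the proposition by direct substitution of the hypothesized values of $\omega$ and $\alpha$ into the formulas defining $d$, $\gamma$, and the error constants $D_0,D_1,D_0',D_1'$, followed by a handful of elementary monotonicity and squaring arguments that compare the results with the analogous quantities of Theorems \ref{t1} and \ref{t2}. For each case I would first evaluate $d$ and $\gamma=\omega+(1-\omega)\sqrt{1+\rho-2\alpha\rho}$, then plug into the threshold $\delta_t^{\omega}=\sqrt{(t-d)/(t-d+\gamma^2)}$ and the constants of (\ref{g24})-(\ref{g25}), and finally compare term by term.

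Cases (1) and (2) are essentially bookkeeping. When $\omega=1$, the definition of $d$ already sets $d=1$, and $\gamma$ collapses to $1$, so (\ref{g23}) becomes $\sqrt{(t-1)/t}$ and the $D_i$'s reduce termwise to the $C_i$'s. Case (2) is analogous: $\alpha=1/2$ forces $1+\rho-2\alpha\rho=1$, hence $\gamma=1$ regardless of $\omega$, and the equality $\beta=\alpha=1/2$ makes $\max\{\alpha,\beta\}\rho=\alpha\rho$, so $d=1-\alpha\rho+\alpha\rho=1$ even when $\omega<1$. Sharpness for $t\geq 4/3$ is then available either by invoking the known sharpness of Theorem \ref{t1} or by specializing the adversarial construction of Theorem \ref{t4} to $\gamma=1$, $d=1$.

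The substantive case is (3). Because $\alpha>1/2$ forces $\beta<\alpha$, we again get $a=\alpha\rho$ and $d=1-\alpha\rho+\alpha\rho=1$. Since $\omega<1$ and $1+\rho-2\alpha\rho<1$ (assuming $\rho>0$), I obtain $\gamma<1$ strictly. The comparison of (\ref{g23}) with (\ref{g1}) is immediate from $\gamma^{2}<1$, and the comparison with (\ref{g7}) reduces, after squaring and setting $a=t-1$, to $3(t-1)\gamma^{2}>\gamma^{4}$, which holds whenever the right-hand side of (\ref{g7}) is nonnegative (and (\ref{g7}) is vacuous otherwise). For $t=2$, comparing (\ref{g23}) with (\ref{g8}) reduces to the easy $(1+\sqrt{2}\gamma)^{2}>1+\gamma^{2}$. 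To obtain $D_0<C_0$, I would rewrite the denominator in (\ref{g24}) in the clean form $\sqrt{t-1}-\delta_{tk}\sqrt{t-1+\gamma^{2}}$, which is strictly larger than the corresponding denominator $\sqrt{t-1}-\delta_{tk}\sqrt{t}$ for $C_0$ precisely because $\gamma^{2}<1$, while the numerators are identical. The inequality $D_1<C_1$ is then handled by the same rewrite combined with a slightly more involved but routine check that each summand of $D_1$ is strictly smaller than the corresponding summand of $C_1$.

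The one step I expect to require real work is the comparison $D_0<C_0''$, $D_1<C_1''$: the constants $C_0'', C_1''$ are built from $\sqrt{1-\delta_{(a+1)k}}$ and $\sqrt{1+\delta_{ak}}$, whereas $D_0,D_1$ are built from $\delta_{tk}$ alone, so the denominators do not line up after a simple substitution. My approach there is to exploit the strict implication (\ref{g23})$\Rightarrow$(\ref{g7}) established above, which yields a quantitative gap between $\sqrt{(t-1)/(t-1+\gamma^{2})}-\delta_{tk}$ and $(t-1-\gamma^{2})/(t-1+\gamma^{2})-\delta_{tk}$, and to push this gap into the denominators of $C_0''$ and $C_1''$ via elementary bounds such as $\sqrt{1\pm\delta}\leq 1+\delta/2$ and $\sqrt{(t-1)(t-1+\gamma^{2})}\leq t-1+\gamma^{2}/2$. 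Finally, $D_0'<C_0'$ and $D_1'<C_1'$ follow at once from the preceding bounds, since $D_1'=D_1$, $C_1'=C_1$, and $D_0'$, $C_0'$ differ from $D_0$, $C_0$ only by a common multiplicative factor.
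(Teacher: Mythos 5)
Parts (1) and (2) of your argument are correct, and they coincide with all the paper actually does (the paper gives no written proof of this Proposition: parts (1)--(2) are the substitutions you perform, and part (3) is supported only by the numerical plots in Figs.~\ref{fig1} and \ref{fig2}). In part (3), your threshold comparisons --- (\ref{g23}) versus (\ref{g1}), versus (\ref{g7}) with the identification $a=t-1$, and versus (\ref{g8}) at $t=2$ --- are sound, as is $D_{0}<C_{0}$ via the rewrite of the denominator as $\sqrt{t-1}-\delta_{tk}\sqrt{t-1+\gamma^{2}}$. The genuine gap is your claim that $D_{1}<C_{1}$ follows from a summand\mbox{-}by\mbox{-}summand comparison. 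Set $E=(t-1+\gamma^{2})\bigl(\sqrt{(t-1)/(t-1+\gamma^{2})}-\delta_{tk}\bigr)$ and $F=t\bigl(\sqrt{(t-1)/t}-\delta_{tk}\bigr)$, so that $D_{1}-1=\sqrt{2}\delta_{tk}\gamma/E+\sqrt{\delta_{tk}/E}$ and $C_{1}-1=\sqrt{2}\delta_{tk}/F+\sqrt{\delta_{tk}/F}$. A direct computation gives
\[
F-E=(1-\gamma^{2})\Bigl(\frac{\sqrt{t-1}}{\sqrt{t}+\sqrt{t-1+\gamma^{2}}}-\delta_{tk}\Bigr),
\]
so whenever $\delta_{tk}<\sqrt{t-1}\,/\bigl(\sqrt{t}+\sqrt{t-1+\gamma^{2}}\bigr)$ (for $t=4$ this is roughly $\delta_{tk}<0.43$) one has $E<F$ and the second summand of $D_{1}$ is \emph{larger} than the corresponding summand of $C_{1}$, so the termwise comparison fails. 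Worse, this is not merely a presentational defect: for $t=4$, $\gamma=0.99$, $\delta_{tk}=0.01$ one finds $D_{1}-1\approx 0.058207$ while $C_{1}-1\approx 0.058172$, i.e.\ $D_{1}>C_{1}$. Hence the strict inequality $D_{1}<C_{1}$ asserted in part (3) fails for admissible parameters (it does hold at the values $\delta_{tk}=0.1$ and $0.6$ plotted in Fig.~\ref{fig1}, which is all the paper verifies), and no proof can close this step without restricting $\delta_{tk}$ or weakening the claim.

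The comparison $D_{0}<C''_{0}$, $D_{1}<C''_{1}$ is the other unproved step, and you rightly flag it as the hard one; but your plan of pushing the gap between the two thresholds into the denominators of $C''_{0},C''_{1}$ cannot work as stated, because those constants involve $\delta_{ak}$ and $\delta_{(a+1)k}$ as free parameters. Taking $\delta_{ak}$ small makes $C''_{0},C''_{1}$ smaller without affecting $D_{0},D_{1}$, and conversely there are values of $\delta_{tk}=\delta_{(a+1)k}$ satisfying (\ref{g23}) but violating (\ref{g7}), for which the formulas (\ref{g6}) are meaningless while $D_{0},D_{1}$ remain finite. Any rigorous version of this comparison must therefore first fix a relation among $\delta_{ak}$, $\delta_{(a+1)k}$ and $\delta_{tk}$ (the paper simply fixes $\delta_{tk}=\delta_{(a+1)k}=0.1$ and $\delta_{ak}=0.05$ in Fig.~\ref{fig2}). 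In short: your proof is complete and correct for parts (1), (2), and the threshold statements of part (3), but the constant comparisons $D_{1}<C_{1}$ and $D_{0}<C''_{0}$, $D_{1}<C''_{1}$ are not established --- and the first of these is false in the generality claimed.
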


\begin{figure*}[htbp!]
\centering
\begin{minipage}[htbp]{0.46\linewidth}
\centering
\includegraphics[width=4in]{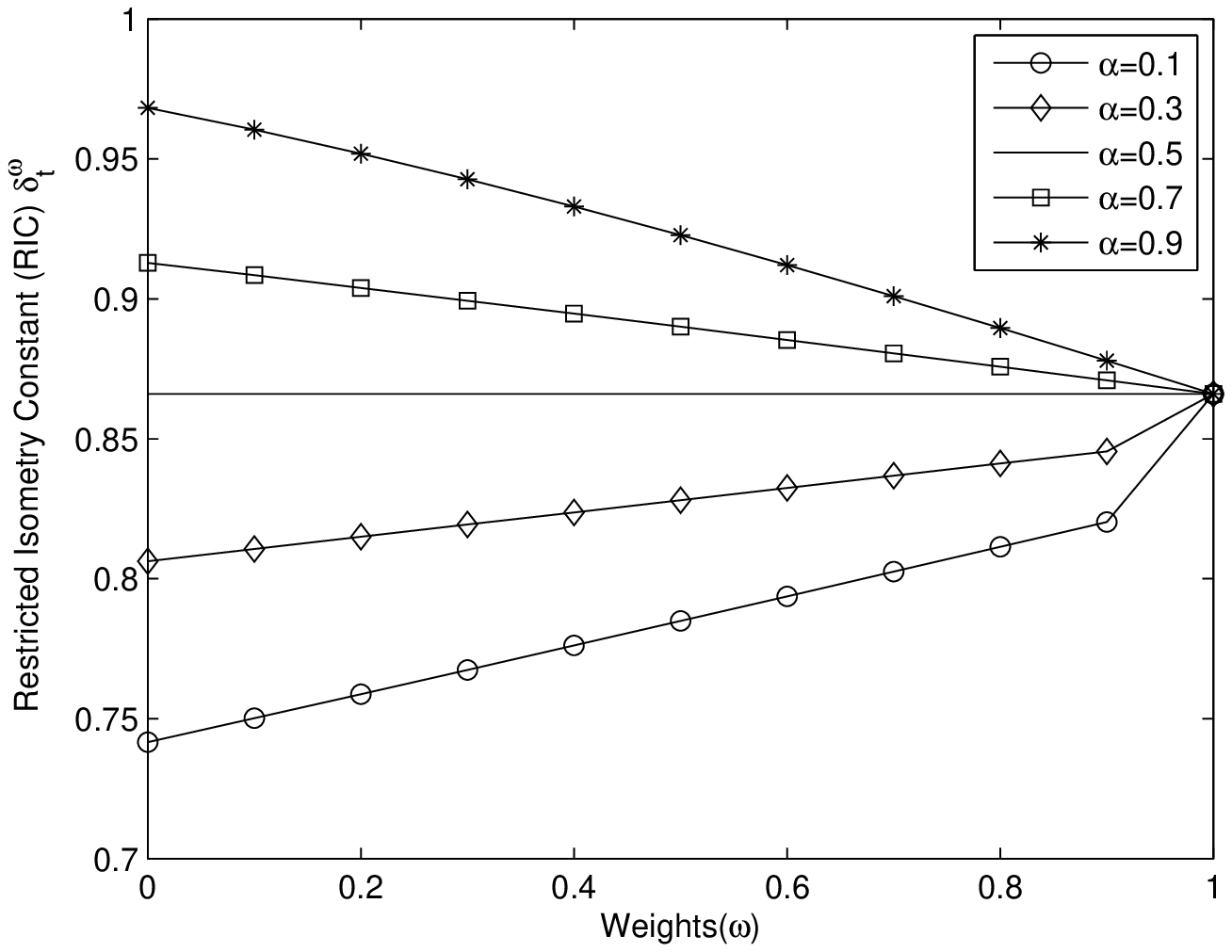}\\[-0.3cm]
{(a) $\delta_{t}^{\omega}$ versus $\omega$}
\end{minipage}
\\
\begin{minipage}[htbp]{0.46\linewidth}
\centering
\includegraphics[width=3.3in]{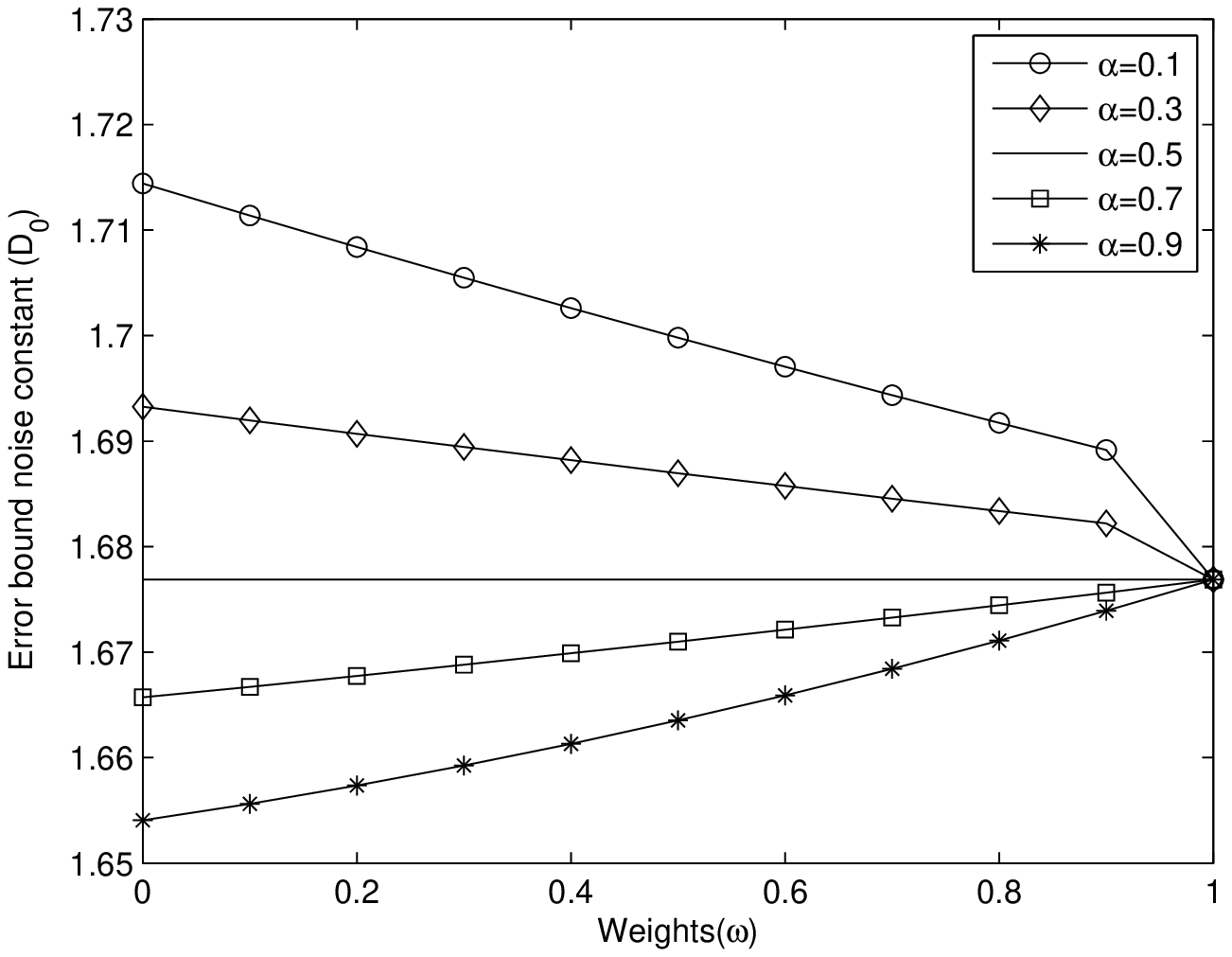}\\[-0.3cm]
{(b) $D_{0}$ versus $\omega$}
\end{minipage}
\begin{minipage}[htbp]{0.46\linewidth}
\centering
\includegraphics[width=3.3in]{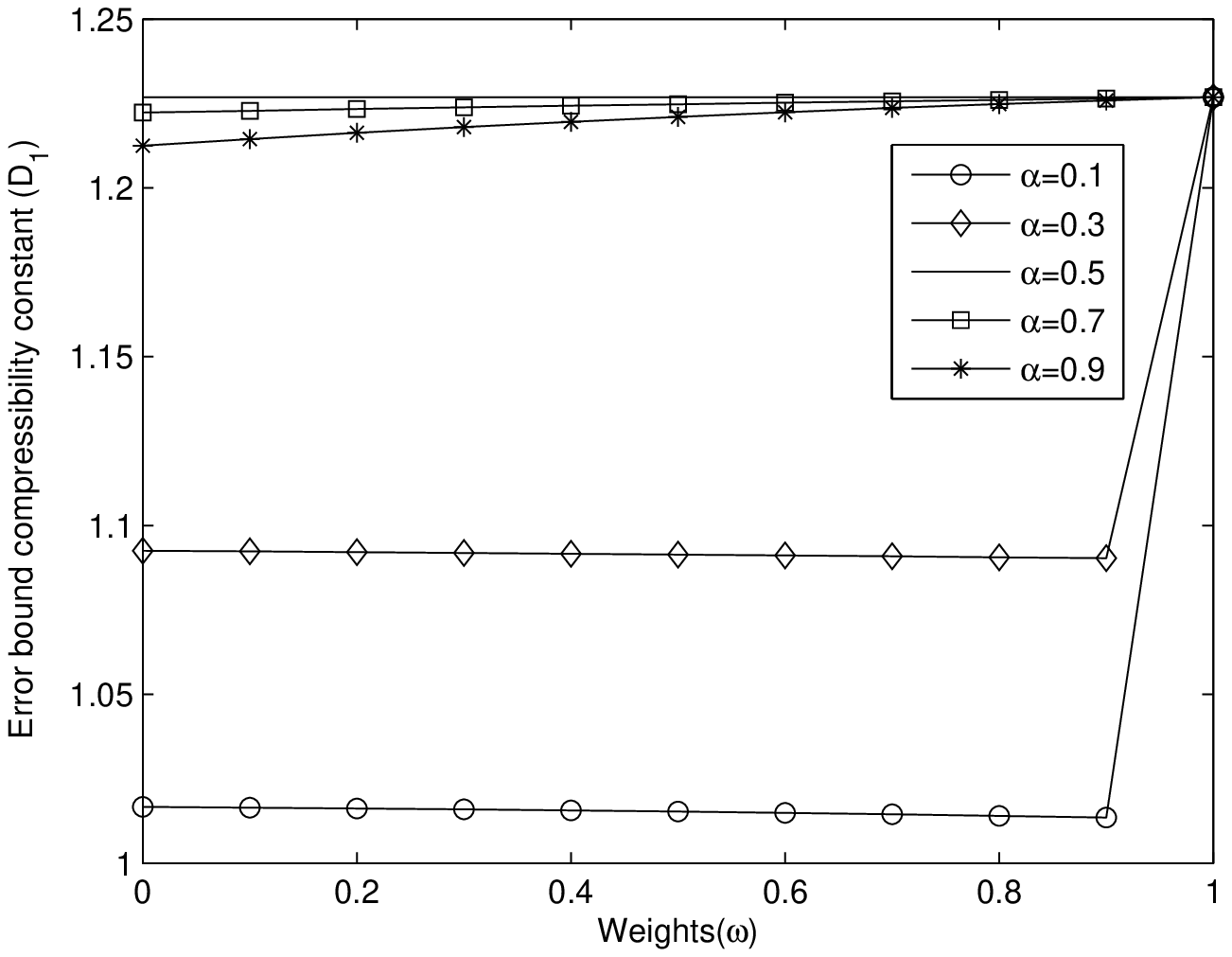}\\[-0.3cm]
{(c) $D_{1}$ versus $\omega$}
\end{minipage}
\\
\begin{minipage}[htbp]{0.46\linewidth}
\centering
\includegraphics[width=3.3in]{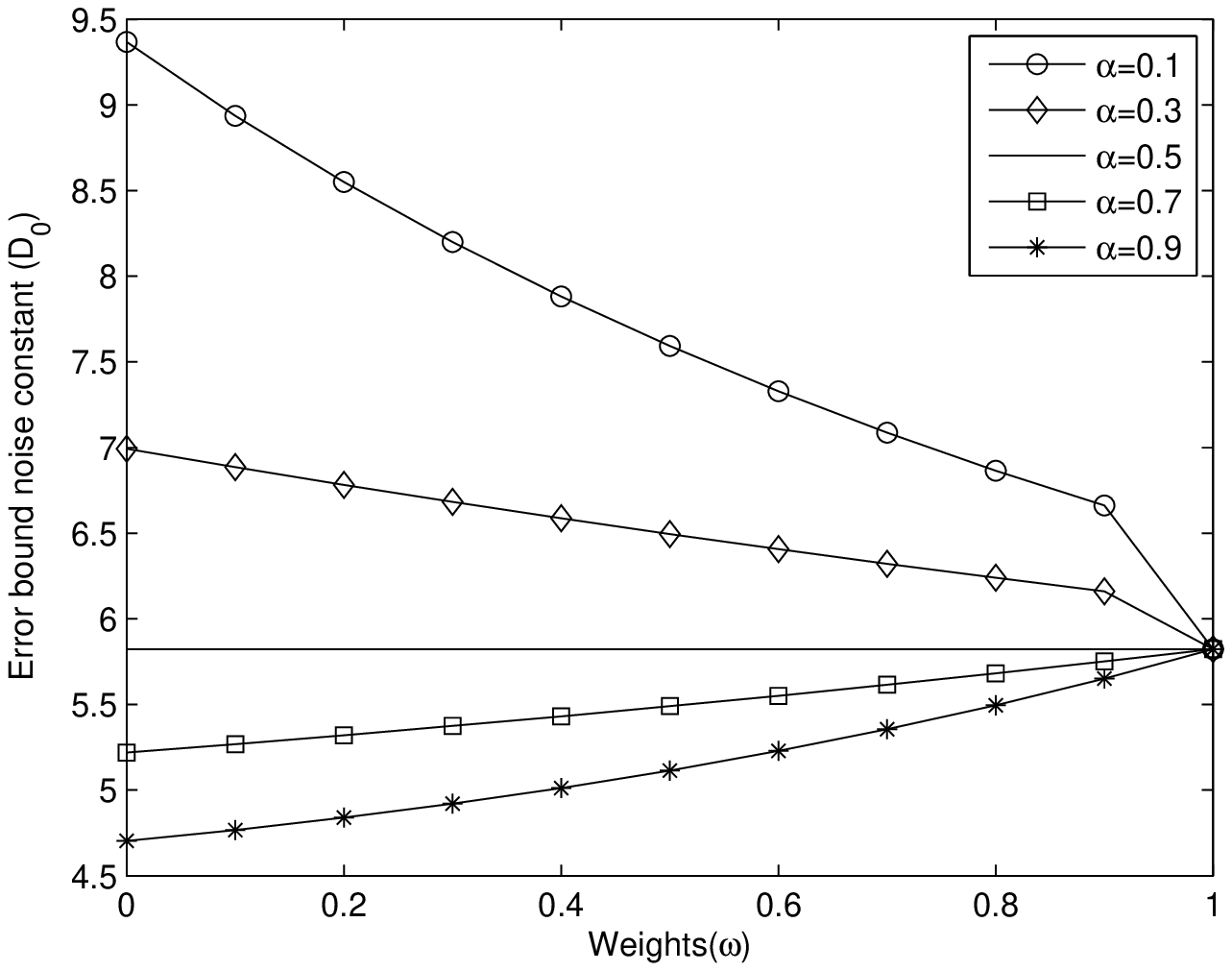}\\[-0.3cm]
{($\mathrm{b}'$) $D_{0}$ versus $\omega$}
\end{minipage}
\begin{minipage}[htbp]{0.46\linewidth}
\centering
\includegraphics[width=3.3in]{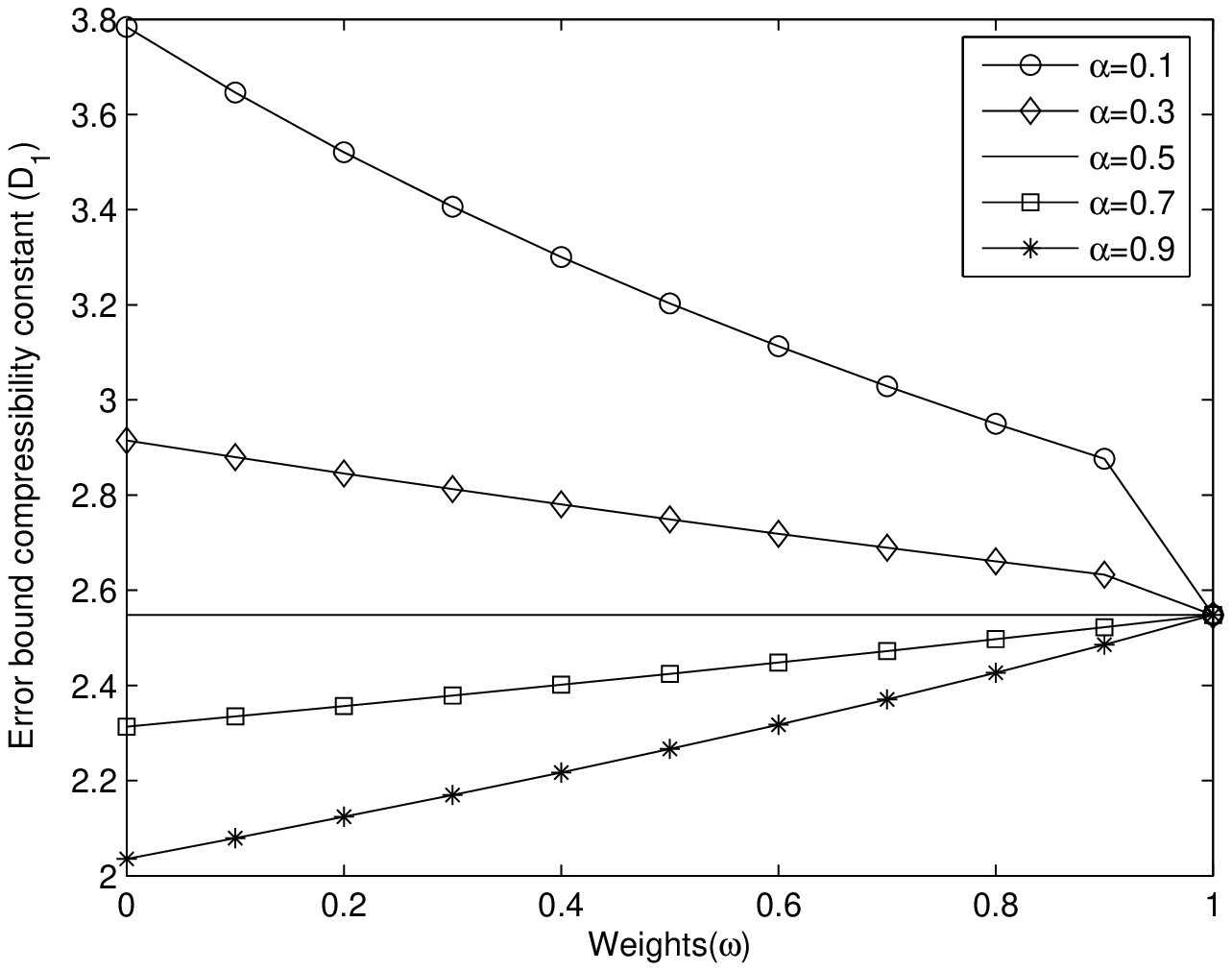}\\[-0.3cm]
{($\mathrm{c}'$) $D_{1}$ versus $\omega$}
\end{minipage}
\\
\caption{\label{fig1}Comparison of the sufficient conditions for
recovery and stability constants for weighted $l_{1}$
 reconstruction with various $\alpha$. In all the figures, we set $t=4$ and $\rho=1$. In (b) and (c),
 we fix $\delta_{tk}=0.1$.
 In ($\mathrm{b}'$) and ($\mathrm{c}'$), we fix $\delta_{tk}=0.6$.  }
\end{figure*}

\begin{figure*}[htbp!]
\centering
\begin{minipage}[htbp]{0.46\linewidth}
\centering
\includegraphics[width=4in]{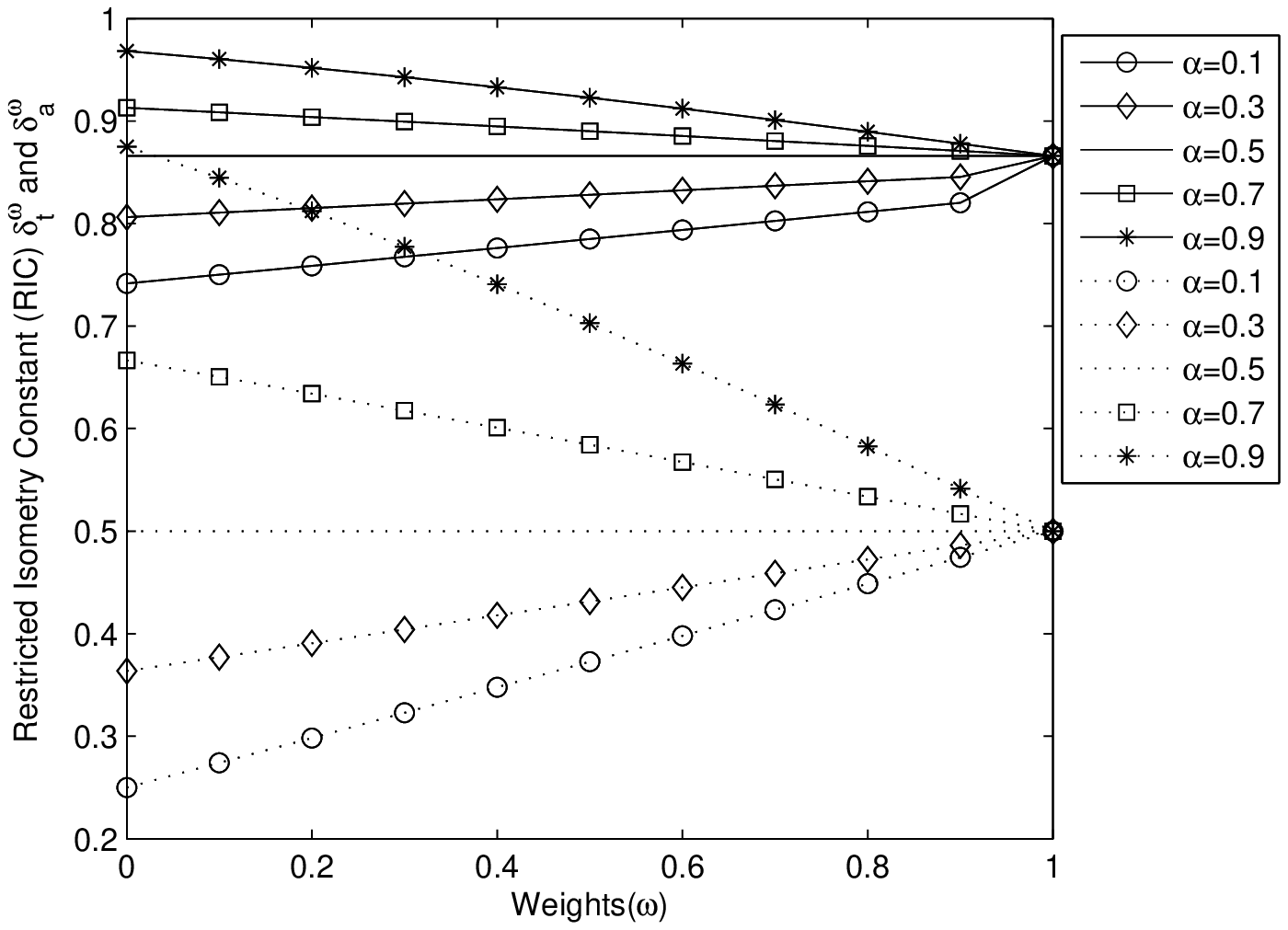}\\[-0.3cm]
{(d) $\delta_{t}^{\omega}$ and $\delta_{a}^{\omega}$ versus
$\omega$}
\end{minipage}
\\
\begin{minipage}[htbp]{0.46\linewidth}
\centering
\includegraphics[width=3.3in]{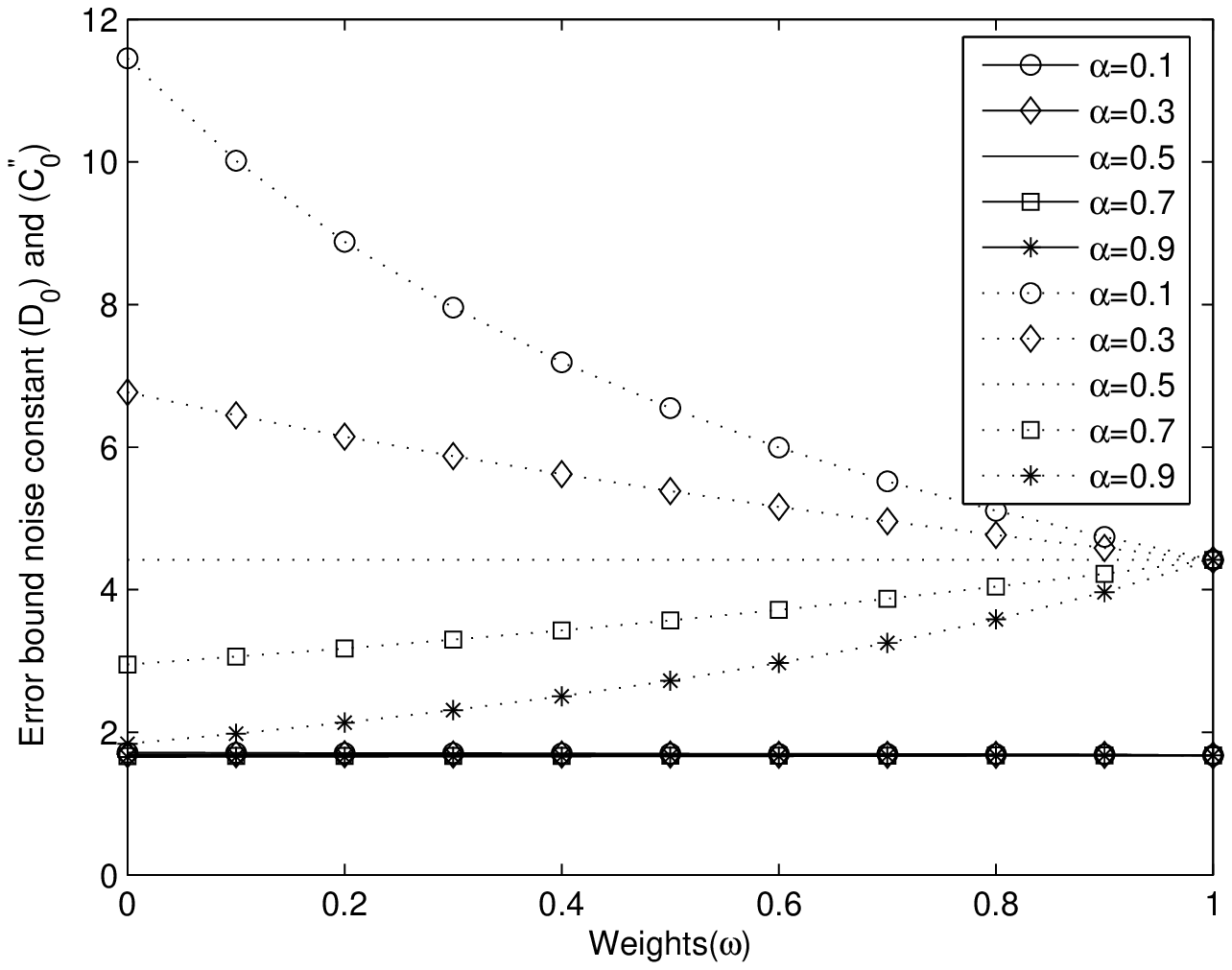}\\[-0.3cm]
{(e) $D_{0}$ and $C''_{0}$ versus $\omega$}
\end{minipage}
\begin{minipage}[htbp]{0.46\linewidth}
\centering
\includegraphics[width=3.3in]{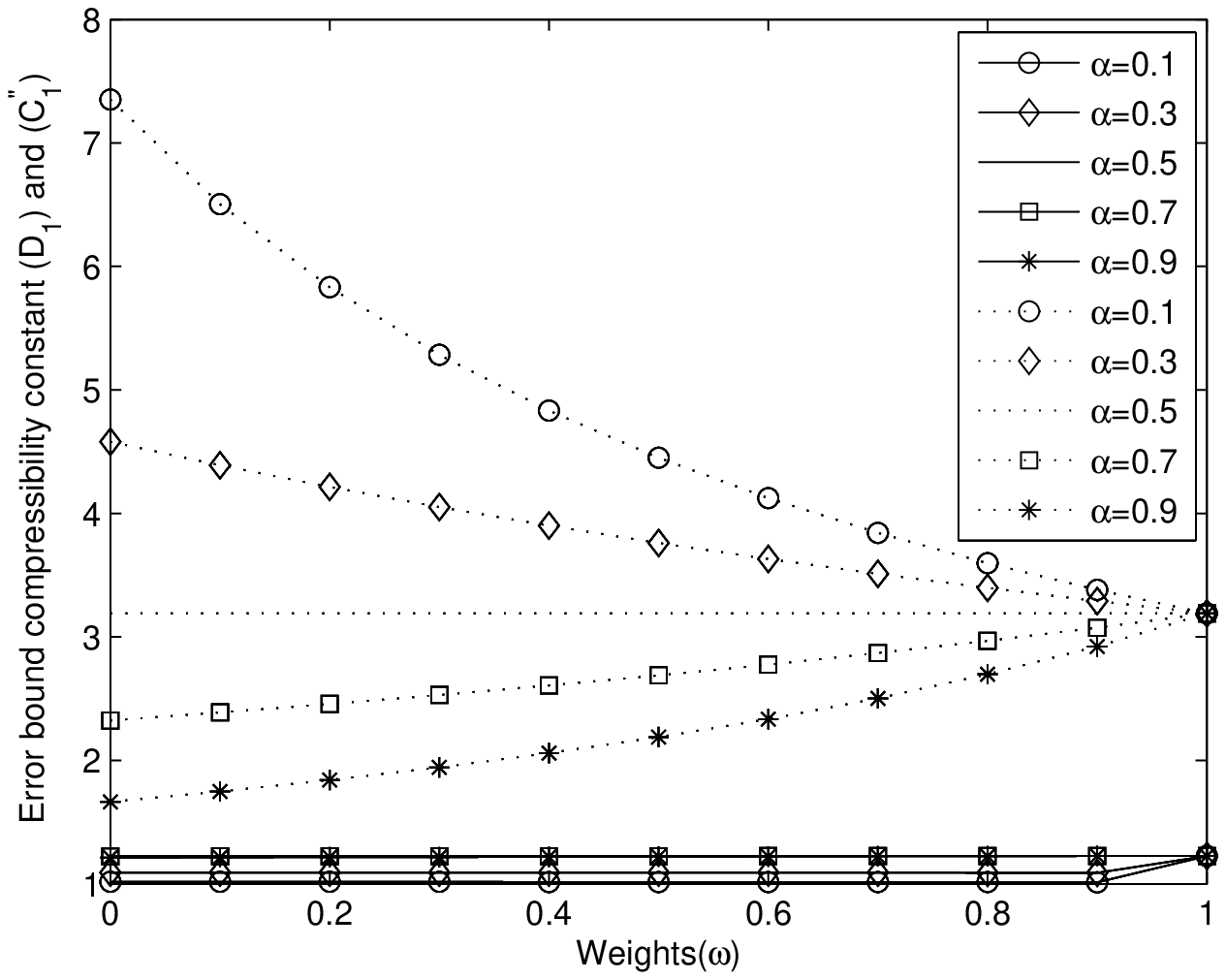}\\[-0.3cm]
{(f) $D_{1}$ and   $C''_{1}$ versus $\omega$}
\end{minipage}
\\
\caption{\label{fig2}Comparison between the bounds of sufficient
recovery conditions $\delta_{t}^{\omega}$ in (\ref{g23}) and
$\delta_{a}^{\omega}$ in (\ref{g7}) as well as stability constants
in (\ref{g24}) and (\ref{g6}) with various $\alpha$. In all the
figures, we set $t=4, ~a=3$ and $\rho=1$. The solid describe our
main results and the dotted describe the results of Friedlander et
al. \cite{FMSY}. In (e) and (f), we fix
$\delta_{tk}=\delta_{(a+1)k}=0.1$ and $\delta_{ak}=0.05$. }
\end{figure*}

Fig. \ref{fig1} illustrates how the sufficient conditions on the RIP
constants given in (\ref{g23}) and the stability constants given in
(\ref{g24}) change with $\omega$ for different values of $\alpha$ in
the case of weighted $l_{1}$ when $t=4$. Note that (\ref{g23})
reduces to (\ref{g1}), and (\ref{g24}) reduces to (\ref{g21}) if
$\omega=1$ or $\alpha=0.5$. In Fig. \ref{fig1} (a), we plot
$\delta_{t}^{\omega}$ versus $\omega$ with different values of
$\alpha$ when $t=4$. We observe that the bound on RIP constant gets
larger as $\alpha$ increases. That is to say, the sufficient
condition on the RIP constant becomes weaker as $\alpha$ increases.
For example, if $90\%$ of the support estimate is accurate and
$\omega=0.4$, we have $\delta_{t}^{\omega}=0.9330$, however
$\delta_{t}^{1}= 0.8660$ of standard $l_{1}$. Figs. \ref{fig1}(b)
and \ref{fig1}($\mathrm{b}'$) show that the constant $D_{0}$
decreases as $\alpha$ increases with $\delta_{tk}=0.1$ and
$\delta_{tk}=0.6$, respectively. But Fig. \ref{fig1}(c) demonstrates
the constant $D_{1}$ with $\alpha\neq0.5$ is smaller than that with
$\alpha=0.5$ when $\delta_{tk}=0.1$. Fig. \ref{fig1}($\mathrm{c}'$)
illustrates that the constant $D_{1}$ decreases as $\alpha$
increases with $\delta_{tk}=0.6$. From above recovery results by
standard $l_{1}$ and weighted $l_{1}$, we see that if the partial
support estimate is more than $50\%$ accurate, i.e. $\alpha>0.5$,
the measurement matrix $A$ for signal recovery by weighted $l_{1}$
satisfies weaker conditions than the analogous conditions for
recovery by standard $l_{1}$. Moreover, we have better upper bounds
when $\alpha
>0.5$ than those of standard $l_{1}$.

Fig. \ref{fig2} compares the sufficient recovery conditions
$\delta_{t}^{\omega}$ in (\ref{g23}) and  $\delta_{a}^{\omega}$ in
(\ref{g7}) as well as stability constants in (\ref{g24}) and
(\ref{g6}) with various $\alpha$ when $t=4, ~a=3$ and
$\delta_{tk}=\delta_{(a+1)k}=0.1$. Here we plot
$\delta_{t}^{\omega}$ and $\delta_{a}^{\omega}$ as well as
(\ref{g24}) and (\ref{g6}) versus $\omega$ with various $\alpha$.
Fig.\ref{fig2}(d) illustrates $\delta_{t}^{\omega}$ is larger than
$\delta_{a}^{\omega}$ under the same support estimate. Moreover,
Figs. \ref{fig2}(e) and \ref{fig2}(f) describe that constants
$D_{0}$ and $D_{1}$ are always smaller than $C''_{0}$ and $C''_{1}$,
respectively. These results state that the sufficient condition
(\ref{g23}) is weaker than (\ref{g7}), and error bound constants
(\ref{g24}) in Theorem \ref{t3}  are better than those (\ref{g6}) in
Theorem \ref{t2}.

\section{Proofs}\label{4}
\begin{proof}[Proof of Theorem \ref{t3}]
Firstly, we show the estimate (\ref{g9}). Let
$\widehat{x}^{l_{2}}=x+h$, where $x$ is the original signal and
$\widehat{x}^{l_{2}}$ is the minimizer of (\ref{f2}) with
(\ref{b1}). Now assume that $tk$ is an integer. We use the following
inequality which has been shown by Friedlander $et~ al.$ (see (21)
in \cite{FMSY}).
\begin{align}\label{g11}
  \|h_{T_{0}^{c}}\|_{1}\leq&\omega\|h_{T_{0}}\|_{1}+(1-\omega)\|h_{T_{0}\cup\widetilde{T}
  \setminus\widetilde{T}_{\alpha}}\|_{1}
  +2\left(\omega\|x_{T_{0}^{c}}\|_{1}+(1-\omega)\|x_{\widetilde{T}^{c}\cap T_{0}^{c}}\|_{1}\right).
\end{align}
Let $\widetilde{T}_{0}=T_{0}\backslash\widetilde{T}_{\alpha}$, and
$T_{1}$ indexes the $ak$ largest in magnitude coefficients of
$h_{\widetilde{T}_{0}^{c}}$, where $|T_{1}|=ak$ and $a=\max\{\alpha,
\beta\}\rho$. Denote
 $$T_{01}=\left\{
               \begin{array}{cc}
                 T_{0}, & \omega=1, \\
                 \widetilde{T}_{0}\cup T_{1}, & 0\leq \omega<1.
               \end{array}
               \right.
               $$
Clearly, $|T_{01}|=dk$ where $$d=\left\{
 \begin{array}{cc}
   1,& \omega=1 \\
   1-\alpha\rho+a, & 0\leq\omega<1
 \end{array}
 \right..$$
From (\ref{g11}) and $d\geq 1$, it is clear that
\begin{align}\label{g12}
   \|h_{-\max(dk)}\|_{1}&\leq\omega\|h_{T_{0}}\|_{1}+(1-\omega)\|h_{T_{0}
   \cup\widetilde{T}\setminus\widetilde{T}_{\alpha}}\|_{1}
  +2\left(\omega\|x_{T_{0}^{c}}\|_{1}+(1-\omega)\|x_{\widetilde{T}^{c}\cap T_{0}^{c}}\|_{1}\right).
\end{align}
Let
\begin{align*}r=&\frac{1}{k}\Big[\omega\|h_{T_{0}}\|_{1}+(1-\omega)\|h_{T_{0}
\cup\widetilde{T}\setminus\widetilde{T}_{\alpha}}\|_{1}
+2\left(\omega\|x_{T_{0}^{c}}\|_{1}+(1-\omega)\|x_{\widetilde{T}^{c}\cap
T_{0}^{c}}\|_{1}\right)\Big].\end{align*}
 We partition $h_{-\max(dk)}$ into two parts, i.e.,
 $h_{-\max(dk)}=h^{(1)}+h^{(2)}$, where
  $h^{(1)}(i)$ equals to $h_{-\max(dk)}(i)$ if $|h_{-\max(dk)}(i)|>\frac{r}{t-d}$ and $0$ else,
  $h^{(2)}(i)$ equals to $h_{-\max(dk)}(i)$ if $|h_{-\max(dk)}(i)|\leq\frac{r}{t-d}$ and $0$ else.

 In view of the above definitions and  (\ref{g12}),
 $$\|h^{(1)}\|_{1}\leq \|h_{-\max{(dk)}}\|_{1}\leq kr.$$
Let
$$\|h^{(1)}\|_{0}=m.$$
From the definition of $h^{(1)}$, it is clear that
\begin{align*}kr\geq \|h^{(1)}\|_{1}
&=\sum\limits_{i\in \mathrm{supp}(h^{(1)})}|h^{(1)}(i)|  \geq
\sum\limits_{i\in \mathrm{supp}(h^{(1)})}\frac{r}{t-d}
=\frac{mr}{t-d}.\end{align*} Namely $m\leq k(t-d).$ Moreover,
$\|h_{\max{(dk)}}+h^{(1)}\|_{0}=dk+m\leq dk+k(t-d)=tk,$ and
\begin{align}\label{g13}
\|h^{(2)}\|_{1}&=\|h_{-\max(dk)}\|_{1}-\|h^{(1)}\|_{1} \leq
kr-\frac{mr}{t-d}  \notag \\
&=(k(t-d)-m)\cdot\frac{r}{t-d},    \notag\\
\|h^{(2)}\|_{\infty}&\leq \frac{r}{t-d}.
\end{align}
By the definition of $\delta_{k}$ and the fact that
\begin{align}\label{g14}
  \|Ah\|_{2}&\leq \|A\widehat{x}^{l_{2}}-Ax\|_{2}
  \leq\|y-A\widehat{x}^{l_{2}}\|_{2}+\|Ax-y\|_{2}\leq 2\varepsilon,
\end{align}
 we obtain
\begin{align}\label{g15}
 \langle A(h_{\max{(dk)}}+h^{(1)}), Ah\rangle
  \leq &\|A(h_{\max{(dk)}}+h^{(1)})\|_{2}\|Ah\|_{2} \notag\\
  \leq &\sqrt{1+\delta_{tk}}\|h_{\max{(dk)}}+h^{(1)}\|_{2}\cdot(2\varepsilon).
\end{align}
Thus, using Lemma \ref{l1} and (\ref{g13}), we have
$h^{(2)}=\sum\limits_{i=1}^{N}\lambda_{i}u_{i},$
$\mathrm{supp}(u_{i})\subseteq \mathrm{supp}(h^{(2)}),
  \|u_{i}\|_{1}=\|h^{(2)}\|_{1}$ and $\|u_{i}\|_{\infty}\leq \frac{r}{t-d},$
where $u_{i}$ is $(k(t-d)-m)-$sparse, namely,
$|\mathrm{supp}(u_{i})|=\|u_{i}\|_{0}\leq k(t-d)-m$. Thus,
\begin{align*}
 \|u_{i}\|_{2}&\leq \sqrt{\|u_{i}\|_{0}}\|u_{i}\|_{\infty}\leq \sqrt{k(t-d)-m}\|u_{i}\|_{\infty} \\
 &\leq \sqrt{k(t-d)}\cdot\frac{r}{t-d}\leq \sqrt{\frac{k}{t-d}}r.
\end{align*}
Take $\beta_{i}=h_{\max{(dk)}}+h^{(1)}+\mu u_{i},$ where $0\leq \mu
\leq 1$. We observe that
\begin{align}\label{g16}
  \sum\limits_{j=1}^{N}&\lambda_{j}\beta_{j}-\frac{1}{2}\beta_{i}
  =h_{\max{(dk)}}+h^{(1)}+\mu h^{(2)}-\frac{1}{2}\beta_{i} \notag\\
  =&(\frac{1}{2}-\mu)(h_{\max{(dk)}}+h^{(1)})-\frac{1}{2}\mu u_{i}+\mu h.
\end{align}
Because $h_{\max{(dk)}}$ is $dk-$sparse, $h^{(1)}$ is $m-$sparse,
and $u_{i}$ is $k(t-d)-m-$sparse, $\beta_{i}$ and
$\sum\limits_{j=1}^{N}\lambda_{j}\beta_{j}-\frac{1}{2}\beta_{i}-\mu
h$ are $tk-$sparse. Let
\begin{align*}
X&=\|h_{\max{(dk)}}+h^{(1)}\|_{2},\\
\gamma&=\omega+(1-\omega)\sqrt{1+\rho-2\alpha\rho}, \\
P&=\frac{2\left(\omega\|x_{T_{0}^{c}}\|_{1}+(1-\omega)\|x_{\widetilde{T}^{c}\cap
T_{0}^{c}}\|_{1}\right)}{\sqrt{k}\gamma}.\end{align*}
 Due to
$|T_{0}\cup\widetilde{T}\backslash\widetilde{T}_{\alpha}|=(1+\rho-2\alpha\rho)k$,
\begin{align}\label{g17}
  \|u_{i}\|_{2}\leq& \sqrt{\frac{k}{t-d}}r \notag\\
=&\sqrt{\frac{k}{t-d}}\cdot\frac{1}{k}\Big[\omega\|h_{T_{0}}\|_{1}+(1-\omega)\|h_{T_{0}
\cup\widetilde{T}\setminus\widetilde{T}_{\alpha}}\|_{1}
+2\left(\omega\|x_{T_{0}^{c}}\|_{1}+(1-\omega)\|x_{\widetilde{T}^{c}\cap
T_{0}^{c}}\|_{1}\right)\Big] \notag\\
=&\frac{\omega\|h_{T_{0}}\|_{1}+(1-\omega)\|h_{T_{0}\cup\widetilde{T}
\setminus\widetilde{T}_{\alpha}}\|_{1}}{\sqrt{k(t-d)}}
+\frac{2\left(\omega\|x_{T_{0}^{c}}\|_{1}+(1-\omega)\|x_{\widetilde{T}^{c}
\cap T_{0}^{c}}\|_{1}\right)}{\sqrt{k(t-d)}} \notag\\
\leq&\frac{\omega\sqrt{k}\|h_{T_{0}}\|_{2}+(1-\omega)\sqrt{(1+\rho-2\alpha\rho)k}
\|h_{T_{0}\cup\widetilde{T}\setminus\widetilde{T}_{\alpha}}\|_{2}}{\sqrt{k(t-d)}}
  +\frac{2\left(\omega\|x_{T_{0}^{c}}\|_{1}+(1-\omega)\|x_{\widetilde{T}^{c}\cap T_{0}^{c}}\|_{1}\right)}{\sqrt{k(t-d)}}\notag\\
\leq&\frac{\omega\|h_{\max{(dk)}}\|_{2}+(1-\omega)\sqrt{1+\rho-2\alpha\rho}\|h_{\max{(dk)}}\|_{2}}{\sqrt{t-d}}
  +\frac{2\left(\omega\|x_{T_{0}^{c}}\|_{1}+(1-\omega)\|x_{\widetilde{T}^{c}\cap T_{0}^{c}}\|_{1}\right)}{\sqrt{k(t-d)}}  \notag\\
=&\frac{\left(\omega+(1-\omega)\sqrt{1+\rho-2\alpha\rho}\right)\|h_{\max{(dk)}}\|_{2}}{\sqrt{t-d}}
  +\frac{2\left(\omega\|x_{T_{0}^{c}}\|_{1}+(1-\omega)\|x_{\widetilde{T}^{c}\cap T_{0}^{c}}\|_{1}\right)}{\sqrt{k(t-d)}} \notag\\
\leq& \frac{\gamma}{\sqrt{t-d}}\|h_{\max{(dk)}}+h^{(1)}\|_{2}
  +\frac{2\left(\omega\|x_{T_{0}^{c}}\|_{1}+(1-\omega)\|x_{\widetilde{T}^{c}\cap T_{0}^{c}}\|_{1}\right)}{\sqrt{k(t-d)}} \notag\\
  =&\frac{\gamma}{\sqrt{t-d}}(X+P).
\end{align}

We use the following identity (see (25) in \cite{CZ})
\begin{align}\label{g18}
 \sum\limits_{i=1}^{N}\lambda_{i}\Big\|A\Big(\sum\limits_{j=1}^{N}\lambda_{j}\beta_{j}-\frac{1}{2}\beta_{i}\Big)\Big\|_{2}^{2}
 =\sum\limits_{i=1}^{N}\frac{\lambda_{i}}{4}\|A\beta_{i}\|_{2}^{2}.
\end{align}

Combining (\ref{g15}) and (\ref{g16}), we can estimate the left hand
side of (\ref{g18})
\begin{align*}
\sum\limits_{i=1}^{N}&\lambda_{i}\Big\|A\Big(\sum\limits_{j=1}^{N}\lambda_{j}\beta_{j}-\frac{1}{2}\beta_{i}\Big)\Big\|_{2}^{2}\\
 &=\sum\limits_{i=1}^{N}\lambda_{i}\Big\|A\Big[(\frac{1}{2}-\mu)(h_{\max{(dk)}}+h^{(1)})-\frac{1}{2}\mu u_{i}+\mu h\Big]\Big\|_{2}^{2} \\
&=\sum\limits_{i=1}^{N}\lambda_{i}\Big \|A
 \Big[(\frac{1}{2}-\mu)(h_{\max{(dk)}}+h^{(1)})-\frac{1}{2}\mu u_{i}\Big]\Big\|_{2}^{2}  \\
&+2\left\langle A\left(
(\frac{1}{2}-\mu)(h_{\max{(dk)}}+h^{(1)})-\frac{1}{2}\mu
h^{(2)}\right), \mu Ah\right\rangle
+\mu^{2}\|Ah\|_{2}^{2} \\
&=\sum\limits_{i=1}^{N}\lambda_{i}\Big \|A
 \Big[(\frac{1}{2}-\mu)(h_{\max{(dk)}}+h^{(1)})-\frac{1}{2}\mu u_{i}\Big]\Big\|_{2}^{2}  \\
 &+\mu(1-\mu)\langle A(h_{\max{(dk)}}+h^{(1)}), Ah\rangle  \\
&\leq(1+\delta_{tk})\sum\limits_{i=1}^{N}\lambda_{i}\Big\|(\frac{1}{2}-\mu)(h_{\max{(dk)}}+h^{(1)})-\frac{1}{2}\mu u_{i}\Big\|_{2}^{2}  \\
&+\mu(1-\mu)\sqrt{1+\delta_{tk}}\|h_{\max{(dk)}}+h^{(1)}\|_{2}\cdot(2\varepsilon) \\
&=(1+\delta_{tk})\sum\limits_{i=1}^{N}\lambda_{i}\Big[(\frac{1}{2}-\mu)^{2}\|h_{\max{(dk)}}+h^{(1)}\|_{2}^{2}
   +\frac{\mu^{2}}{4}
\|u_{i}\|_{2}^{2}\Big]\\
&+\mu(1-\mu)\sqrt{1+\delta_{tk}}\|h_{\max{(dk)}}+h^{(1)}\|_{2}\cdot(2\varepsilon).
\end{align*}

On the other hand, in view of the expression of $\beta_{i}$,
\begin{align*}
\sum\limits_{i=1}^{N}\frac{\lambda_{i}}{4}\|A\beta_{i}\|_{2}^{2}
&=\sum\limits_{i=1}^{N}\frac{\lambda_{i}}{4}\|A(h_{\max{(dk)}}+h^{(1)}+\mu u_{i})\|_{2}^{2} \\
&\geq\sum\limits_{i=1}^{N}\frac{\lambda_{i}}{4}(1-\delta_{tk})\|h_{\max{(dk)}}+h^{(1)}+\mu u_{i}\|_{2}^{2}\\
&= (1-\delta_{tk})\sum\limits_{i=1}^{N}\frac{\lambda_{i}}{4}
\Big(\|h_{\max{(dk)}}+h^{(1)}\|_{2}^{2}+\mu^{2}\|u_{i}\|_{2}^{2}\Big).
\end{align*}
It follows from the above two inequalities and (\ref{g17}) that
\begin{align*}
  0=&\sum\limits_{i=1}^{N}\lambda_{i}\Big\|A\Big(\sum\limits_{j=1}^{N}\lambda_{j}\beta_{j}-\frac{1}{2}\beta_{i}\Big)\Big\|_{2}^{2}
-\sum\limits_{i=1}^{N}\frac{\lambda_{i}}{4}\|A\beta_{i}\|_{2}^{2}  \\
\leq&(1+\delta_{tk})\sum\limits_{i=1}^{N}\lambda_{i}\Big[(\frac{1}{2}-\mu)^{2}\|h_{\max{(dk)}}+h^{(1)}\|_{2}^{2}
   +\frac{\mu^{2}}{4} \|u_{i}\|_{2}^{2}\Big] \\
&+\mu(1-\mu)\sqrt{1+\delta_{tk}}\|h_{\max{(dk)}}+h^{(1)}\|_{2}\cdot(2\varepsilon) \\
&-(1-\delta_{tk})\sum\limits_{i=1}^{N}\frac{\lambda_{i}}{4}\Big(\|h_{\max{(dk)}}+h^{(1)}\|_{2}^{2}+\mu^{2}\|u_{i}\|_{2}^{2}\Big)  \\
=&\sum\limits_{i=1}^{N}\lambda_{i}\bigg\{\left((1+\delta_{tk})(\frac{1}{2}-\mu)^{2}-\frac{1}{4}(1-\delta_{tk})\right)\\
&\cdot\|h_{\max{(dk)}}+h^{(1)}\|_{2}^{2}+\frac{1}{2}\delta_{tk}\mu^{2}\|u_{i}\|_{2}^{2}\bigg\}\\
&+\mu(1-\mu)\sqrt{1+\delta_{tk}}\|h_{\max{(dk)}}+h^{(1)}\|_{2}\cdot(2\varepsilon)\\
\leq&\left[(1+\delta_{tk})(\frac{1}{2}-\mu)^{2}-\frac{1}{4}(1-\delta_{tk})+\frac{\delta_{tk}\mu^{2}\gamma^{2}}{2(t-d)}\right]
X^{2}  \\
&+\left[\mu(1-\mu)\sqrt{1+\delta_{tk}}\cdot(2\varepsilon)+\frac{\delta_{tk}\mu^{2}\gamma^{2}P}{t-d}\right]X
+\frac{\delta_{tk}\mu^{2}\gamma^{2}P^{2}}{2(t-d)}\\
=&\Big[(\mu^{2}-\mu)+\Big(\frac{1}{2}-\mu+(1+\frac{\gamma^{2}}{2(t-d)})\mu^{2}\Big)\delta_{tk}\Big]X^{2}\\
&+\Big[\mu(1-\mu)\sqrt{1+\delta_{tk}}\cdot(2\varepsilon)+\frac{\delta_{tk}\mu^{2}\gamma^{2}P}{t-d}\Big]X
  +\frac{\delta_{tk}\mu^{2}\gamma^{2}P^{2}}{2(t-d)}.
\end{align*}

Taking $\mu=\frac{\sqrt{(t-d)(t-d+\gamma^{2})}-(t-d)}{\gamma^{2}}$,
we obtain
\begin{align*}
  -&\frac{t-d+\gamma^{2}}{t-d}\mu^{2}\left(\sqrt{\frac{t-d}{t-d+\gamma^{2}}}-\delta_{tk}\right)X^{2}  \\
+&\bigg(\frac{t-d+\gamma^{2}}{t-d}\mu^{2}\sqrt{\frac{t-d}{t-d+\gamma^{2}}}\sqrt{1+\delta_{tk}}\cdot(2\varepsilon)
+\frac{\delta_{tk}\mu^{2}\gamma^{2}P}{t-d}\bigg)X
+\frac{\delta_{tk}\mu^{2}\gamma^{2}P^{2}}{2(t-d)}\geq0.
\end{align*}
Namely,
\begin{align*}
 \frac{\mu^{2}}{t-d}&\Big[-(t-d+\gamma^{2})\Big(\sqrt{\frac{t-d}{t-d+\gamma^{2}}}-\delta_{tk}\Big)X^{2}  \\
&+\Big(\sqrt{(t-d)(t-d+\gamma^{2})(1+\delta_{tk})}\cdot(2\varepsilon)\\
&+\delta_{tk}\gamma^{2}P\Big)X
+\frac{\delta_{tk}\gamma^{2}P^{2}}{2}\Big]\geq 0,
\end{align*}
which is a second-order inequality for $X$. Hence, we have
\begin{align*}
 &X\leq\bigg\{\Big( 2\varepsilon\sqrt{(t-d)(t-d+\gamma^{2})(1+\delta_{tk})}+\delta_{tk}\gamma^{2}P \Big)\\
 &+\Big[\Big( 2\varepsilon\sqrt{(t-d)(t-d+\gamma^{2})(1+\delta_{tk})}+\delta_{tk}\gamma^{2}P \Big)^{2}\\
 &+2(t-d+\gamma^{2})\Big( \sqrt{\frac{t-d}{t-d+\gamma^{2}}}-\delta_{tk} \Big)\delta_{tk}\gamma^{2}P^{2} \Big]^{1/2}\bigg\} \\
&\cdot \Big(2(t-d+\gamma^{2})\sqrt{\frac{t-d}{t-d+\gamma^{2}}}-\delta_{tk})\Big)^{-1}\\
&\leq\frac{\sqrt{(t-d)(t-d+\gamma^{2})(1+\delta_{tk})}}{(t-d+\gamma^{2})(\sqrt{\frac{t-d}{t-d+\gamma^{2}}}-\delta_{tk})}(2\varepsilon) \\
&+\frac{2\delta_{tk}\gamma^{2}+\sqrt{2(t-d+\gamma^{2})(\sqrt{\frac{t-d}{t-d+\gamma^{2}}}-\delta_{tk})\delta_{tk}\gamma^{2}}}
{2(t-d+\gamma^{2})(\sqrt{\frac{t-d}{t-d+\gamma^{2}}}-\delta_{tk})}P.
\end{align*}
From (\ref{g11}) and the representation of $P$, it is clear that
\begin{align*}
 \|h_{-\max(dk)}\|_{1}
\leq&\|h_{\max(dk)}\|_{1}
 +2\left(\omega\|x_{T_{0}^{c}}\|_{1}+(1-\omega)\|x_{\widetilde{T}^{c}\cap T_{0}^{c}}\|_{1}\right)\\
=&\|h_{\max(dk)}\|_{1}+P\sqrt{k}\gamma.
\end{align*}
It follows from Lemma \ref{l2} that
$$ \|h_{-\max(dk)}\|_{2}\leq\|h_{\max(dk)}\|_{2}+\frac{P\gamma}{\sqrt{d}}.$$
Thus, we have the estimate of $\|h\|_{2}$ by the above inequalities
\begin{align*}
&\|h\|_{2}=\sqrt{\|h_{\max(dk)}\|_{2}^{2}+\|h_{-\max(dk)}\|_{2}^{2}}  \\
\leq&\sqrt{\|h_{\max(dk)}\|_{2}^{2}+\left(\|h_{\max(dk)}\|_{2}+\frac{P\gamma}{\sqrt{d}}\right)^{2}} \\
\leq&\sqrt{2}\|h_{\max(dk)}\|_{2}+\frac{P\gamma}{\sqrt{d}} \\
\leq&\sqrt{2}\|h_{\max(dk)}+h^{(1)}\|_{2}+\frac{P\gamma}{\sqrt{d}} \\
=&\sqrt{2}X+\frac{P\gamma}{\sqrt{d}} \\
\leq&\frac{\sqrt{2(t-d)(t-d+\gamma^{2})(1+\delta_{tk})}}{(t-d+\gamma^{2})(\sqrt{\frac{t-d}{t-d+\gamma^{2}}}-\delta_{tk})}(2\varepsilon) \\
+&\Bigg(\frac{\sqrt{2}\delta_{tk}\gamma^{2}+\sqrt{(t-d+\gamma^{2})(\sqrt{\frac{t-d}{t-d+\gamma^{2}}}-\delta_{tk})\delta_{tk}\gamma^{2}}}
{(t-d+\gamma^{2})(\sqrt{\frac{t-d}{t-d+\gamma^{2}}}-\delta_{tk})}
+\frac{\gamma}{\sqrt{d}}\Bigg)P \\
=&\frac{\sqrt{2(t-d)(t-d+\gamma^{2})(1+\delta_{tk})}}{(t-d+\gamma^{2})(\sqrt{\frac{t-d}{t-d+\gamma^{2}}}-\delta_{tk})}(2\varepsilon) \\
+&\Bigg(\frac{\sqrt{2}\delta_{tk}\gamma^{2}+\sqrt{(t-d+\gamma^{2})(\sqrt{\frac{t-d}{t-d+\gamma^{2}}}-\delta_{tk})\delta_{tk}\gamma^{2}}}
{(t-d+\gamma^{2})(\sqrt{\frac{t-d}{t-d+\gamma^{2}}}-\delta_{tk})}
+\frac{\gamma}{\sqrt{d}}\Bigg)
\frac{2\left(\omega\|x_{T_{0}^{c}}\|_{1}+(1-\omega)\|x_{\widetilde{T}^{c}\cap T_{0}^{c}}\|_{1}\right)}{\sqrt{k}\gamma} \\
=&\frac{\sqrt{2(t-d)(t-d+\gamma^{2})(1+\delta_{tk})}}{(t-d+\gamma^{2})(\sqrt{\frac{t-d}{t-d+\gamma^{2}}}-\delta_{tk})}(2\varepsilon) \\
+&\Bigg(\frac{\sqrt{2}\delta_{tk}\gamma+\sqrt{(t-d+\gamma^{2})(\sqrt{\frac{t-d}{t-d+\gamma^{2}}}-\delta_{tk})\delta_{tk}}}
{(t-d+\gamma^{2})(\sqrt{\frac{t-d}{t-d+\gamma^{2}}}-\delta_{tk})}
+\frac{1}{\sqrt{d}}\Bigg)
\frac{2\left(\omega\|x_{T_{0}^{c}}\|_{1}+(1-\omega)\|x_{\widetilde{T}^{c}\cap
T_{0}^{c}}\|_{1}\right)}{\sqrt{k}}.
\end{align*}

If $tk$ is not an integer, taking $t'=\lceil tk\rceil/k$, then $t'k$
is an integer and $t< t'$. Thus we have
$$\delta_{t'k}=\delta_{tk}<\sqrt{\frac{t-d}{t-d+\gamma^{2}}}<\sqrt{\frac{t'-d}{t'-d+\gamma^{2}}}.$$
Then we can prove the result the same as the proof above by working
on $\delta_{t'k}$. So, we obtain (\ref{g9}).

Next, we prove (\ref{g10}). The proof of (\ref{g10}) is similar to
the proof of (\ref{g9}). We only need to replace (\ref{g14}) and
(\ref{g15}) with the following (\ref{g19}) and (\ref{g20}),
respectively. We also can get (\ref{g10}).

\begin{align}\label{g19}
  \|A^{T}Ah\|_{\infty}  &= \|A^{T}A(\widehat{x}^{DS}-x)\|_{\infty} \notag\\
    \leq& \|A^{T}(A\widehat{x}^{DS}-y)\|_{\infty}+\|A^{T}(y-Ax)\|_{\infty} \notag\\
   \leq& 2\varepsilon,
\end{align}
\begin{align}\label{g20}
  \langle A(h_{\max{(dk)}}+h^{(1)}), Ah\rangle  &=\langle h_{\max{(dk)}}+h^{(1)}, ~A^{T}Ah\rangle \notag\\
    \leq& \|h_{\max{(dk)}}+h^{(1)}\|_{1}\|A^{T}Ah\|_{\infty} \notag\\
  \leq & \sqrt{tk}\|h_{\max{(dk)}}+h^{(1)}\|_{2}\cdot(2\varepsilon).
\end{align}

This completes the proof of Theorem \ref{t3}.
\end{proof}

\begin{proof}[Proof of Theorem \ref{t4}]
For $d=1$, we have
$\gamma=\omega+(1-\omega)\sqrt{1+\rho-2\alpha\rho}\leq 1$. Moreover,
for all $\varepsilon>0$ and $k\geq \frac{6}{\varepsilon}$, we define
$$m'=\frac{1+\sqrt{1-\gamma^{2}}}{\gamma^{2}}\left(t-1+\sqrt{(t-1)(t-1+\gamma^{2})}\right)k$$
and $N\geq k+m'$. Since $t\geq
1+\frac{\left(1-\sqrt{1-\gamma^{2}}\right)^{2}}{\gamma^{2}+2\left(1-\sqrt{1-\gamma^{2}}\right)}$,
we obtain $m'\geq k$. Let $m$ be the largest integer strictly
smaller than $m'$, then $m<m'$ and $m'-m\leq 1$. We take
\begin{align*}x_{1}=\frac{1}{\sqrt{k+\frac{mk^{2}}{m'^{2}}}}(\underbrace{1,\ldots,1}_{k-\alpha\rho k},
\underbrace{-\frac{k}{m'},\ldots,-\frac{k}{m'}}_{\rho
k},\underbrace{1, \ldots, 1}_{\alpha\rho k},
\underbrace{-\frac{k}{m'},\ldots,-\frac{k}{m'}}_{ m-\rho
k},0,\ldots,0)\in\mathbb{R}^{N},\end{align*} if $m>\rho k$; or take
\begin{align*}x_{1}=\frac{1}{\sqrt{k+\frac{mk^{2}}{m'^{2}}}}(\underbrace{1,\ldots,1}_{k-\alpha\rho k},
\underbrace{\overbrace{-\frac{k}{m'},\ldots,-\frac{k}{m'}}^{m},
0,\ldots,0}_{\rho k},
 \underbrace{1, \ldots, 1}_{\alpha\rho k},
0,\ldots,0)\in\mathbb{R}^{N},\end{align*} if $m\leq\rho k$. It is
easy to know $\|x_{1}\|_{2}=1.$ Define the linear map $A:
\mathbb{R}^{N}\rightarrow\mathbb{R}^{N}$ by
\begin{align*}
 Ax&=\sqrt{1+\sqrt{\frac{t-d}{t-d+\gamma^{2}}}}(x-\langle x_{1}, x\rangle x_{1})\\
 &=\sqrt{1+\sqrt{\frac{t-1}{t-1+\gamma^{2}}}}(x-\langle x_{1}, x\rangle x_{1}),
\end{align*}
for all $x\in\mathbb{R}^{N}.$ Then for any $\lceil tk\rceil-$sparse
vector $x$, we get
$$\|Ax\|_{2}^{2} =\left(1+\sqrt{\frac{t-1}{t-1+\gamma^{2}}}\right)\left(\|x\|_{2}^{2}-|\langle x_{1}, x\rangle|^{2}\right).$$
Hence, using Cauchy-Schwarz inequality and the fact that $m'\geq k$,
$m'-m\leq 1$ and
\begin{align*}
 \frac{m'^{2}+k^{2}(t-1)}{m'^{2}+m'k}=\frac{2\sqrt{t-1}(\sqrt{t-1+\gamma^{2}}-\sqrt{t-1})}{\gamma^{2}},
    \end{align*}
we have
\begin{align*}
 0\leq&|\langle x_{1}, x\rangle|^{2}\leq\|x\|_{2}^{2}\cdot\sum\limits_{i\in {\rm supp}(x)}|x_{1}(i)|^{2} \\
 \leq& \|x\|_{2}^{2}\cdot \|x_{1,\max{(\lceil tk\rceil)}}\|_{2}^{2}  \\
 =&\|x\|_{2}^{2}\cdot\frac{m'^{2}+k(\lceil tk\rceil-k)}{m'^{2}+mk} \\
\leq&\frac{m'^{2}+k^{2}(t-1)+k}{m'^{2}+mk}\|x\|_{2}^{2}\\
=&\frac{m'^{2}+k^{2}(t-1)+k}{m'^{2}+m'k}\cdot\frac{m'^{2}+m'k}{m'^{2}+mk}\|x\|_{2}^{2}\\
=&\frac{m'^{2}+k^{2}(t-1)+k}{m'^{2}+m'k}\cdot\frac{1}{1-\frac{k(m'-m)}{m'^{2}+m'k}}\|x\|_{2}^{2}\\
=&\frac{m'^{2}+k^{2}(t-1)}{m'^{2}+m'k}\cdot\frac{m'^{2}+k^{2}(t-1)+k}{m'^{2}+k^{2}(t-1)}\cdot\frac{1}{1-\frac{k(m'-m)}{m'^{2}+m'k}}\|x\|_{2}^{2}\\
\leq&\frac{2\sqrt{t-1}(\sqrt{t-1+\gamma^{2}}-\sqrt{t-1})}{\gamma^{2}}\cdot(1+\frac{1}{tk})\cdot\frac{1}{1-\frac{1}{2k}}\|x\|_{2}^{2}\\
\leq&\frac{2\sqrt{t-1}(\sqrt{t-1+\gamma^{2}}-\sqrt{t-1})}{\gamma^{2}}\cdot(1+\frac{3}{k})\|x\|_{2}^{2}\\
\leq&\left(\frac{2\sqrt{t-1}(\sqrt{t-1+\gamma^{2}}-\sqrt{t-1})}{\gamma^{2}}+\frac{3}{k}\right)\|x\|_{2}^{2}.
\end{align*}
Consequently,
\begin{align*}
      &\left(1+\sqrt{\frac{t-1}{t-1+\gamma^{2}}}+\varepsilon\right)\|x\|_{2}^{2} \\
      \geq&\left(1+\sqrt{\frac{t-1}{t-1+\gamma^{2}}}\right)\|x\|_{2}^{2}\geq\|Ax\|_{2}^{2}\\
      \geq&\left(1+\sqrt{\frac{t-1}{t-1+\gamma^{2}}}\right)\cdot\left(1-\frac{2\sqrt{t-1}(\sqrt{t-1+\gamma^{2}}-\sqrt{t-1})}{\gamma^{2}}-\frac{3}{k}\right)\|x\|_{2}^{2}\\
      =&\Bigg[\left(1+\sqrt{\frac{t-1}{t-1+\gamma^{2}}}\right)\cdot\left(1-\frac{2\sqrt{t-1}(\sqrt{t-1+\gamma^{2}}-\sqrt{t-1})}{\gamma^{2}}\right)\\
      &-\left(1+\sqrt{\frac{t-1}{t-1+\gamma^{2}}}\right)\frac{3}{k} \Bigg]\|x\|_{2}^{2}\\
      =&\left[1-\sqrt{\frac{t-1}{t-1+\gamma^{2}}}-\left(1+\sqrt{\frac{t-1}{t-1+\gamma^{2}}}\right)\frac{3}{k} \right]\|x\|_{2}^{2}\\
      \geq&\left(1-\sqrt{\frac{t-1}{t-1+\gamma^{2}}}-\varepsilon\right)\|x\|_{2}^{2},
     \end{align*}
which deduces $\delta_{tk}\leq
\sqrt{\frac{t-1}{t-1+\gamma^{2}}}+\varepsilon$. Next, we define
\begin{align*}
  &x_{0}=(\overbrace{1,\ldots,1}^{k-\alpha\rho k}, \overbrace{0,\ldots, 0}^{\rho k}, \overbrace{1,\ldots,1}^{\alpha\rho k},0,\ldots,0)\in\mathbb{R}^{N}, \\
  &\eta_{0}=(\underbrace{0,\ldots, 0}_{k-\alpha\rho k}, \underbrace{\frac{k}{m'},\ldots,\frac{k}{m'}}_{\rho k},
  \underbrace{0,\ldots, 0}_{\alpha\rho k},\underbrace{\frac{k}{m'},\ldots,\frac{k}{m'}}_{m-\rho k}, 0,\ldots,0)\in\mathbb{R}^{N},  {\rm if}~ m>\rho
  k,\\
   {\rm or} ~&\eta_{0}=(\underbrace{0,\ldots, 0}_{k-\alpha\rho k}, \underbrace{\overbrace{\frac{k}{m'},\ldots,\frac{k}{m'}}^{m},0,\ldots,0}_{\rho k},
\underbrace{0,\ldots, 0}_{\alpha\rho k},
0,\ldots,0)\in\mathbb{R}^{N}, {\rm if} ~m\leq\rho k,
  \end{align*}
where $\|x_{0}\|_{1,w}=k$, $\|\eta_{0}\|_{1, w}\leq
m\cdot\frac{k}{m'}<k$. Obviously, $x_{0}$ is $k-$sparse,
$x_{1}=\frac{1}{\sqrt{k+\frac{mk^{2}}{m'^{2}}}}(x_{0}-\eta_{0})$ and
$\|\eta_{0}\|_{1, w}< \|x_{0}\|_{1,w}$. In view of $Ax_{1}=0$, we
have $Ax_{0}=A\eta_{0}$.

Thus, in the noiseless case $y=Ax_{0}$, suppose that the weighted
$l_{1}$ minimization method (\ref{f2}) can exactly recover $x_{0}$,
 i.e., $\widehat{x}=x_{0}$. According to the definition of $\widehat{x}$ and $y=A\eta_{0}$,
 it contradicts that $\|\eta_{0}\|_{1, w}< \|x_{0}\|_{1,w}=\|\widehat{x}\|_{1,w}$.

 In the noise case $y=Ax_{0}+z$, suppose that the weighted $l_{1}$ minimization method (\ref{f2}) can stably recover $x_{0}$,
 i.e., $\lim\limits_{z\rightarrow0}\widehat{x}=x_{0}$. We observe that $y-A(\widehat{x}-x_{0}+\eta_{0})=y-A\widehat{x}\in \mathcal{B}$,
thus $\|\widehat{x}\|_{1,w}\leq
\|\widehat{x}-x_{0}+\eta_{0}\|_{1,w}$. As $z\rightarrow 0$,
$\|x_{0}\|_{1,w}\leq \|\eta_{0}\|_{1,w}$. It contradicts that
$\|\eta_{0}\|_{1, w}< \|x_{0}\|_{1,w}$.

Hence, the weighted $l_{1}$ minimization method (\ref{f2}) fails to
exactly and stably recover $x_{0}$ based on $y$ and $A$.
\end{proof}


\end{document}